\renewcommand\footnotetextcopyrightpermission[1]{} 
\algrenewcommand\ALG@beginalgorithmic{\footnotesize}
\definecolor{darkgray}{gray}{0.40}
\newcommand{\InlineComment}[1]{\State{\color{darkgray} \scriptsize $\triangleright$ #1}}
\algrenewcommand{\algorithmiccomment}[1]{\hfill{}{\color{darkgray} \scriptsize $\triangleright$ #1}}
\tikzset{every mark/.append style={mark size=2}}
\newcommand{\ALGtikzmarkcolor}{black}
\newcommand{\ALGtikzmarkextraindent}{4pt}
\newcommand{\ALGtikzmarkverticaloffsetstart}{-.5ex}
\newcommand{\ALGtikzmarkverticaloffsetend}{-.5ex}
\newcommand{\ALGtikzmarkhorizontaloffset}{3pt}
\newcounter{ALG@tikzmark@tempcnta}
\newcommand\ALG@tikzmark@start{%
    \global\let\ALG@tikzmark@last\ALG@tikzmark@starttext%
    \expandafter\edef\csname ALG@tikzmark@\theALG@nested\endcsname{\theALG@tikzmark@tempcnta}%
    \tikzmark{ALG@tikzmark@start@\csname ALG@tikzmark@\theALG@nested\endcsname}%
    \addtocounter{ALG@tikzmark@tempcnta}{1}%
}
\def\ALG@tikzmark@starttext{start}
\newcommand\ALG@tikzmark@end{%
    \ifx\ALG@tikzmark@last\ALG@tikzmark@starttext
    \else
        \tikzmark{ALG@tikzmark@end@\csname ALG@tikzmark@\theALG@nested\endcsname}%
        \tikz[overlay,remember picture] \draw[\ALGtikzmarkcolor] let \p{S}=($(pic cs:ALG@tikzmark@start@\csname ALG@tikzmark@\theALG@nested\endcsname)+(\ALGtikzmarkextraindent,\ALGtikzmarkverticaloffsetstart)$), \p{E}=($(pic cs:ALG@tikzmark@end@\csname ALG@tikzmark@\theALG@nested\endcsname)+(\ALGtikzmarkextraindent,\ALGtikzmarkverticaloffsetend)$), \p{F}=($(\p{S})+(\ALGtikzmarkhorizontaloffset,0pt)$) in (\x{S},\y{S})--(\x{S},\y{E})--(\x{F},\y{E});%
    \fi
    \gdef\ALG@tikzmark@last{end}%
}
\apptocmd{\ALG@beginblock}{\ALG@tikzmark@start}{}{\errmessage{failed to patch}}
\pretocmd{\ALG@endblock}{\ALG@tikzmark@end}{}{\errmessage{failed to patch}}
\newcommand{\BFTSMART}[1][]{\textsc{BFT-SMaRt}\xspace}
\newcommand{\id}[1][]{\mathit{id}}
\renewcommand{\Statex}{\vspace{0.5em}}
\newcommand{\proj}{PnyxDB}
\title{{\proj{}}:~a {Lightweight} {Leaderless} {Democratic} {Byzantine Fault Tolerant} {Replicated} {Datastore}}
\author{Loïck Bonniot}
\affiliation{\institution{InterDigital}}
\affiliation{\institution{Univ Rennes, Inria, CNRS, IRISA}\city{Rennes}\country{France}}
\email{loick.bonniot@interdigital.com}
\author{Christoph Neumann}
\affiliation{\institution{InterDigital}\city{Rennes}\country{France}}
\email{christoph.neumann@interdigital.com}
\author{François Taïani}
\affiliation{\institution{Univ Rennes, Inria, CNRS, IRISA}\city{Rennes}\country{France}}
\email{francois.taiani@irisa.fr}
\begin{abstract}
  Byzantine-Fault-Tolerant (BFT) systems are rapidly emerging as a viable technology for production-grade systems, notably in closed consortia deployments for financial and supply-chain applications.
Unfortunately, most algorithms proposed so far to coordinate these systems suffer from substantial scalability issues, and lack important features to implement Internet-scale governance mechanisms.

In this paper, we observe that many application workloads offer little concurrency, and propose \proj{}, an eventually-consistent Byzantine Fault Tolerant replicated datastore that exhibits both high scalability and low latency.
Our approach is based on conditional endorsements, that allow nodes to specify the set of transactions that must \emph{not} be committed for the endorsement to be valid.
In addition to its high scalability, \proj{} supports application-level voting, i.e. individual nodes are able to endorse or reject a transaction according to application-defined policies without compromising consistency.
We provide a comparison against \BFTSMART and Tendermint, two competitors with different design aims, and show that our implementation speeds up commit latencies by a factor of 11, remaining below 5 seconds in a worldwide geodistributed deployment of 180 nodes.

\end{abstract}
\begin{document}
\maketitle{}

\section{Introduction}

Byzantine-Fault-Tolerant (BFT) systems have attracted a large body of works over the last two decades~\cite{Malkhi1998,Malkhi1998a,Castro1999,Bessani2008,Bessani2014,Kotla2009,Duan2014,Next700BFT},
and have now moved into the public spotlight following the dramatic rise of blockchain platforms~\cite{Nakamoto2008,EthereumFoundation}.
These systems typically rely on powerful BFT replication protocols to ensure consistency between their replicas, and withstand arbitrary failures and potential malicious behavior.
Unfortunately, traditional BFT replication protocols struggle to scale beyond a few tens of replicas~\cite{Dantas2007}, while the proof-of-work technique used by many blockchain-based systems suffers from large computing and storage overheads.

Recent attempts to overcome these scalability barriers have explored leaderless designs~\cite{Abd-El-Malek2005,Pedone2011,Luiz2011,Lamport2011,Moraru2013,Crain2017,Rocket2019},
alternatives to proof-of-work such as proof-of-stake~\cite{Gilad2017}, or assumed access to a trusted third party providing strong coordination and ordering guarantees~\cite{Androulaki2018}.
All these strategies are however fraught with limitations: existing leaderless protocols rely either on clients for consistency checks~\cite{Abd-El-Malek2005} (increasing computing overhead)
or on the availability of strong coordination mechanisms, such as a trusted peer-sampling service~\cite{Rocket2019} or atomic broadcast primitives~\cite{Pedone2011,Luiz2011,Crain2017,Androulaki2018};
proof-of-stake links a node's influence to its stake in the system,
a problematic dependency for many use cases; and trusted third parties considerably limit the applicability of such solutions to well-controlled environments.

Compounding these limitations, all above approaches are ill-equipped to support \emph{in-system governance mechanisms}, a growing requirement for applications involving independent organizations~\cite{Goodman2014}.
More specifically, although most of these solutions rely on internal voting or quorum mechanisms, these mechanisms are not exposed to applications as first-class primitives.
As a result, individual nodes cannot implement application-defined policies to endorse or reject transactions without additional effort, costs, and complexity.
This is problematic, as such application-level voting capabilities are key to a number of emerging decentralized BFT applications
involving independent participants who need to balance conflicting goals and shared interests~\cite{Noveck2009,EthereumFoundation}.
Examples of such governance concerns include basic membership management with access control, resource allocation and sharing, crowdsourced scheduling, policy administration and knowledge distribution.
In all these examples, different parties are likely to pursue different agendas, prompting the need for participants~to be able to influence the distributed decision making process according to their own application-defined policies and beliefs~\cite{Goodman2014,Cai2018,Malkhi2019}.

To address these challenges, we advocate in this paper a radically different line of attack: we borrow a popular strategy from non-Byzantine distributed datastores~\cite{Sutra2006,Vogels2008,Shapiro2011,Guerraoui2019},
and tackle scalability by weakening the consistency guarantees, while maintaining Byzantine Fault Tolerance.
We illustrate this design with \emph{\proj{}}\footnote{The Pnyx hill was used as the main meeting place in Athenian democracy.}
 a \emph{Byzantine-Fault-Tolerant Replicated Datastore for closed consortia}.
\proj{} is \emph{eventually consistent} in that clients might perceive conflicting views of the datastore for short periods of time.
\proj{} also provides a unique application-level voting mechanism that allow participating nodes to support or reject proposed transactions according to application-defined policies.

Our proposal leverages the long-observed fact that many workloads exhibit a large proportion of commutative and independent operations~\cite{Moraru2013,Lu2015}:
these operations can be executed out of order without compromising the eventual convergence of all correct nodes.
We exploit these commutative operations through a \emph{modified Byzantine Quorum protocol}~\cite{Malkhi1998a} that ensures the safety and agreement of our system.
More specifically, we introduce \emph{conditional endorsements} within quorums as a mean to flag and handle conflicts by allowing each node to specify the set of transactions that must \emph{not} be committed for the endorsement to be valid.

In this paper, we make the following contributions:
\begin{enumerate}
  \item We present \emph{\proj{}}, a scalable low-latency BFT replicated datastore that supports democratic voting for participants.
  \item We propose a novel conflict resolution protocol that is resilient to Byzantine faults.
    This protocol lies at the heart of \proj{}, and leverages commutative and independent operations to ensure safety in the face of Byzantine behavior, while delivering scalability and low-latency.
  \item We implemented \proj{} and published its source code~\cite{PnyxDBSourceCode}
    We evaluate our implementation against two well-known systems, \BFTSMART~\cite{Bessani2008,Bessani2014} and Tendermint~\cite{Buchman2016}, two competitors representing alternative trade-offs in the design space.
    We demonstrate that our system is able to reduce commit latencies by at least an order of magnitude under realistic Internet conditions, while maintaining steady commit throughput.
    We also show that \proj{} is able to scale to up to 180 replicas on a worldwide geodistributed AWS deployment, with an average latency of a few seconds.
\end{enumerate}

The remainder of this paper is structured as follows.
Sections~\ref{sec:overview} and~\ref{sec:protocol} define our model and specifies our replication protocol, alongside with properties proofs.
Section~\ref{sec:implementation} presents the technical choices made to implement \proj{}.
Section~\ref{sec:evaluation} evaluates our \proj{} implementation.
We present related work in section~\ref{sec:related_work}, followed by a discussion of limitations in section~\ref{sec:discussion}.
Section~\ref{sec:conclusion} concludes this paper.

\section{\proj{} overview\label{sec:overview}}

\subsection{System Model and Assumptions}

We assume a system made of distributed machines (\emph{nodes}) communicating through messages.
Our system defines three types of roles that one node may implement independently:
\begin{itemize}
  \item \textbf{Clients} can submit transactions, each consisting of a list of operations on a replicated key-value datastore;
  \item \textbf{Endorsers} are able to participate in Byzantine consensus quorums by validating and voting on clients' transactions.
    Like existing decentralized ledgers, they store the whole datastore state in order to serve clients and make policy-based decisions;
  \item and \textbf{Observers} maintain a copy of the shared datastore, but are not able to validate transactions.
\end{itemize}
Each system contains a known number $n$ of endorsers, of which a maximum of $f$ can act as \emph{Byzantine}.
Byzantine nodes are allowed to ignore the protocol specification occasionally or completely, and they can collude to create more sophisticated attacks.
Such a behavior is typically the case for malformed, corrupted or malicious nodes.
Non-Byzantine nodes are said to be \textit{correct}.

We also assume we have access to a reliable BFT broadcast primitive with the following property: \emph{if one message is delivered to one correct node, every correct node will eventually receive that message}~\cite{Bracha1987}.
In our implementation, we rely on eventually synchronous networks to ensure that assumption, as detailed in~\autoref{sec:gossip}.
Cryptographic signatures are used to verify nodes' identity and authorizations. We make the standard assumptions that Byzantine participants cannot break these signatures, and that participants know each other beforehand.
In the parlance of distributed ledgers, our system is \emph{permissioned}: this allows for message authenticity and
data access control while staying relatively dynamic.

\subsection{Intuition and Overview}

Closed-membership Byzantine state machine replication typically rely on some form of Byzantine-tolerant consensus that ensures strong consistency~\cite{,Bessani2014,Next700BFT,Buchman2016,Sousa2018}.
As a result,  they unfortunately do not scale beyond a few tens of replicated nodes, due to the inherent cost of executing a Byzantine agreement protocol~\cite{Dutta2005,Martin2006}.
One strategy to overcome this scalability barrier exploits a trusted computing base for coordination and ordering, such as Kafka or Raft in recent versions of Hyperledger~\cite{TheOrderingService,Androulaki2018},
but this approach weakens the security model of the protocol.
Another strategy consists in using proof-of-work or proof-of-stake techniques from open-membership Byzantine ledgers~\cite{Ball2017,Abraham2018,Gilad2017}.
These techniques are either costly or link a node's influence to its stake in the system, two undesirable properties in many cases.
In this paper we tackle scalability by weakening consistency guarantees---a strategy often used by large-scale datastores---while maintaining Byzantine Fault Tolerance (BFT).

\begin{figure}
	\centering
	\input{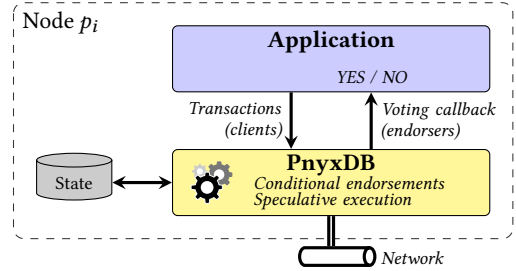}
	\caption{Overview of \proj{}: the application submits transactions to be executed on shared state, and polls the application back for transaction approval before creating conditional endorsements.\label{fig:overview}}
\end{figure}

\autoref{fig:overview} gives an overview of \proj{}'s interface and mechanisms. 
Clients submit transactions that are made of \emph{operations} on keys of the \proj{} datastore.
These operations are typically reads and writes, but \proj{} can be extended to other shared objects with a sequential specification.
These transactions are then broadcast to all endorser nodes, which vote for or against the transaction through an application-level \emph{voting callback}.
This callback provides \emph{in-system governance} by allowing nodes to endorse transactions according to application-level policies. 
Transactions must be supported by a configurable lower threshold of a majority of correct nodes to proceed.

The properties of \proj{} result from the novel combination of two key ingredients: \emph{leaderless quorums} for scalability, and \emph{conditional endorsements} for eventual consistency.

\subsubsection{Leaderless quorums} \proj{} does not use any coordinator, rotating or elected, in contrast to many existing BFT replication solutions~\cite{Buchman2016,Bessani2014,Next700BFT,Castro1999}.
This choice removes a recurring performance bottleneck~in the process, trading off weaker consistency guaranties for higher scalability.
Transactions only need to be endorsed by a Byzantine quorum of endorsers (more than $\frac{n+f}{2}$) to be permanently committed to the system's state.
If two transactions commute (i.e. they contain no conflicting operations), their respective quorums can be built independently, and the transactions applied out of order, thus ensuring \proj{}'s eventual consistency.
This strategy is directly inspired from Conflict-Free Replicated Datatypes (CRDTs)~\cite{Shapiro2011,Raykov2011,Sutra2006} and leverages the fact that many operations in distributed datastores either commute or are independent. 
When this is the case, these transactions may be executed out of order on different nodes without breaking local consistency~\cite{Raykov2011,Li2012}, while allowing every correct node to eventually converge to the same global datastore state.
A typical example is the popular Unspent Transaction Outputs model (UTXO) used in cryptocurrencies~\cite{Nakamoto2008,Frey2016} that avoids concurrency by writing to a variable only once: within this model, conflicts only occur when Byzantine nodes try to re-use an expired variable.
(This problem is well-known as the ``double-spending'' attack.)

\subsubsection{Conditional endorsements} Leaderless quorums work well for commutative transactions, but might lead to deadlocks in case of conflicts, for instance when modifying the same key with non-commutative operations.
We overcome this problem with a second core mechanism: \emph{conditional endorsements}.
When an endorser broadcasts an endorsement, it also publishes a (possibly empty) list of transactions that must \emph{not} be committed for the endorsement to be valid.
(These conflicting transactions are the \emph{conditions} of the endorsement.)
Given a pair of conflicting transactions, all correct nodes will use the same heuristics (based on time-stamps) to decide which one to promote over the other, ensuring a consistent conflict resolution.
Without additional mechanisms, conditional endorsements may however lead to an ever-growing acyclic dependency graph between transactions.
We avoid this outcome by periodically triggering garbage collections (or \emph{checkpoints}) using a binary \emph{Byzantine Veto Procedure} (\autoref{sec:checkpoints_eval}).

\begin{figure}[tb]
  \centering
  \begin{tikzpicture}[
  node distance=6mm,
  state/.style={%
    rectangle,
    minimum size=5mm,
    rounded corners=1mm,
    very thick,
    draw=black!60,
    font=\scshape\small,
  },
  arrow/.style={very thick,>=stealth},
  label/.style={auto,font=\scriptsize\itshape},
  ]
  \node (init) [circle,draw=black,fill=black,minimum size=1mm,inner sep=0mm] {};
  \node (pending)    [state,below=of init,fill=black!20,yshift=3mm] {Pending};
  \node (applicable) [state,below=of pending] {Applicable};
  \node (applied)    [state,below=of applicable] {Applied};
  \node (committed)  [state,below=of applied,fill=teal!20,draw=teal!60] {Committed};
  \node (dropped)    [state,right=of committed,xshift=12mm,fill=red!20,draw=red!60] {Dropped};

  \draw [->,arrow] (init) to node [label,yshift=2mm] {Client submission} (pending);
  \draw [->,arrow,bend left,out=45] (pending.east) to node [label] {Cleanup} (dropped);
  \draw [->,arrow,bend left] (pending) to node [label] {Endorsements} (applicable);
  \draw [<-,arrow,bend right] (pending) to (applicable);
  \draw [->,arrow] (applicable) to node [label] {Speculative Execution} (applied);
  \draw [->,arrow] (applied) to node [label] {Commit} (committed);
  \draw [->,arrow,bend left,out=90,in=90] (applied) to node [label] {Rollback} (pending);
\end{tikzpicture}
  \caption{Transaction state diagram, as viewed by a node.
  From the \textsc{Pending} state, a transaction evolve either to \textsc{Dropped} or \textsc{Committed} given received messages. \textsc{Dropped} and \textsc{Committed} are eventually consistent across all nodes. In contrast, \textsc{Pending}, \textsc{Applicable} and \textsc{Applied} are intermediate states local to each node.}\label{fig:tx_state}
\end{figure}
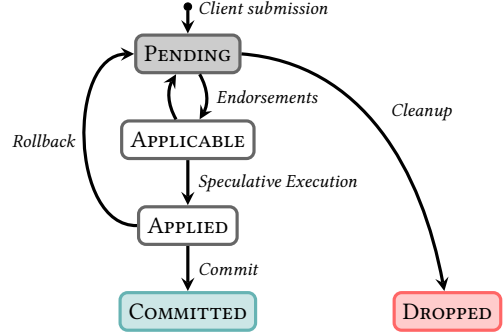

\begin{table}[t]
  \footnotesize
  \renewcommand{\arraystretch}{1.1}
  \caption{Notations used in this paper.}\label{table:notations}
  \centering
  \begin{tabular}{rl}
  \toprule
  \multicolumn{2}{l}{\textbf{System parameters}} \\
  $n$      & Number of nodes \\
  $f$      & Number of faulty nodes \\
  $\omega$ & Required quorum of endorsements \\
  $\Gamma(\Delta, \bar{\Delta})$ &
    $= \left\{\begin{matrix*}[l]
      \textbf{true}~\emph{if}~\Delta~\emph{and}~\bar{\Delta}~\emph{conflict}
      \\
      \textbf{false}~\emph{otherwise}
    \end{matrix*}\right.$ \\
    & \hspace{5mm}where $\Delta, \bar{\Delta}$ are two lists of operations \\
  \midrule
  \multicolumn{2}{l}{\textbf{Message $t \gets \textsc{Transaction}\langle \id, d, R, \Delta \rangle$}} \\
  $t.\id$      & Unique identifier \\
  $t.d$      & Absolute deadline \\
  $t.R$      & Preconditions on datastore state \\
  $t.\Delta$ & List of operations \\
  \midrule
  \multicolumn{2}{l}{\textbf{Message $e \gets \textsc{Endorsement}\langle \id, i, C \rangle$}} \\
  $e.\id$ & Endorsed transaction unique identifier \\
  $e.i$ & Endorser node identifier \\
  $e.C$ & Endorsement conditions, the set of transactions that \\
      & must \textbf{not} be applied for this endorsement to be valid \\
  \midrule
  \multicolumn{2}{l}{\textbf{Variables of node $p_{i}$}} \\
  $\mathit{isSpeculative}_i$ & Whether $p_{i}$ speculatively applies transactions\\
  $\mathit{State}_{i}$  & Datastore state \\
  $T_{i}$      & Transactions endorsed by $p_{i}$ so far \\
  $E_{i,\id}$    & Endorsements received by $p_{i}$ for transaction $\id$ \\
  $\mathit{Policy}_{i}$ & Set of rules that define if $p_{i}$ agrees to apply given \\
               & transactions. This is not necessarily a deterministic \\
               & function and may involve human interaction \\
  \bottomrule
  \end{tabular}
\end{table}

As a result of leaderless quorums and conditional endorsements, transactions proceed through the life cycle presented in~\autoref{fig:tx_state}.
First, a client broadcasts a transaction to endorsers.
If it agrees with the transaction's operations, an endorser node can acknowledge the transaction by broadcasting its \emph{endorsement}.
If a threshold of valid endorsements is received within a transaction deadline (as defined in~\autoref{sec:protocol}), that transaction may enter the \textsc{Applicable} state. A transaction in that state has enough valid endorsements, but the node is not certain that those endorsements will remain valid - because of possible future conflicts.
The \textsc{Applied} state is an artifact introduced by the speculative execution of a transaction, when this mode is activated:
in this temporary state, the system cannot yet commit a transaction but it may execute the operations on the datastore state.
This optional optimization is useful to reduce global latency if the estimated probability of commit is very high.
Transactions can finally transition to final states \textsc{Committed}---once the node is sure that the endorsement will always stay valid---or \textsc{Dropped}, as we will detail in the following sections.

\section{The protocol}\label{sec:protocol}
The used variables and notations are summarized in~\autoref{table:notations}.

\subsection{Transaction applicability and endorsement validity}
\pgfdeclareshape{checkmark}{%
  \inheritsavedanchors[from=circle]
  \inheritanchorborder[from=circle]
  \inheritanchor[from=circle]{center}
  \backgroundpath{%
    \fill[scale=0.4](0,.20) -- (.20,0) -- (.5,.4) -- (.20,.12) -- cycle;
  }
}

\tikzstyle{q}=[fill=blue!20!white]
\tikzstyle{r}=[fill=yellow!50!white]

\begin{figure*}
  \begin{subfigure}[t]{\textwidth}
    \centering
    \begin{adjustbox}{max width=\textwidth,scale=1}
      \tikzstyle{bcastq}=[draw,circle,fill,minimum size=1mm,inner sep=0pt]
\tikzstyle{bcastr}=[draw,circle,fill=white,minimum size=1mm,inner sep=0pt]
\tikzstyle{endorsement}=[draw,fill=white,semithick,circle]
\tikzstyle{transaction}=[draw,semithick,rectangle,minimum size=3.5mm]
\tikzstyle{q}=[fill=blue!20!white]
\tikzstyle{r}=[fill=yellow!50!white]
\tikzstyle{msg}=[->,>=stealth,semithick]
\tikzstyle{legend}=[anchor=west,font=\scriptsize]

\begin{tikzpicture}[x=1cm,y=0.5cm]
  \def\XMAX{11}
  \def\XFRONT{10.5}
  \pgfmathsetmacro\XLEGEND{\XMAX+0.8};
  \pgfmathsetmacro\XLEGENDTEXT{\XLEGEND+0.3};

  \node[draw=none] (p1) at (0, -1) {$p_{1}$};
  \node[draw=none] (p2) at (0, -2) {$p_{2}$};
  \node[draw=none] (p3) at (0, -3) {$p_{3}$};
  \node[draw=none] (p4) at (0, -4) {$p_{4}$};

  \draw[->] (p1) -- +(\XMAX, 0);
  \draw[->] (p2) -- +(\XMAX, 0);
  \draw[->] (p3) -- +(\XMAX, 0);
  \draw[->] (p4) -- +(\XMAX, 0);

  \node[draw=none] at (1, 0) {$q$};
  \node[transaction,q] at (1, -1) {};
  \node[bcastq] (q) at (1, -1) {};
  \draw[msg] (q) .. controls +(0.2, 0.2) .. +(0.4, 0);
  \draw[msg] (q) -- (1.5, -2);
  \draw[msg] (q) -- (1.4, -3);
  \draw[msg] (q) -- (1.8, -4);

  \node[draw=none] at (3, 0) {$e_{q,i}$};
  \node[endorsement,q] at (2.7, -1) {};
  \node[bcastq] (eq1) at (2.7, -1) {};
  \node[endorsement,q] at (3, -2) {};
  \node[bcastq] (eq2) at (3, -2) {};
  \draw[msg] (eq1) .. controls +(0.2, 0.2) .. +(0.4, 0);
  \draw[msg] (eq2) .. controls +(0.2, 0.2) .. +(0.4, 0);
  \draw[msg] (eq1) -- (3.7, -2) {};
  \draw[msg] (eq1) -- (3.2, -4) {};
  \draw[msg] (eq2) -- (3.8, -1) {};
  \draw[msg] (eq2) -- (3.4, -4) {};

  \node[draw=none] at (5, 0) {$r$};
  \node[transaction,r] at (5, -2) {};
  \node[bcastr] (r) at (5, -2) {};
  \draw[msg] (r) .. controls +(0.2, 0.2) .. +(0.4, 0);
  \draw[msg] (r) -- (5.5, -1);
  \draw[msg] (r) -- (6, -4);

  \node[draw=none] at (7, 0) {$e_{r,i}$};
  \node[endorsement,r] at (6.8, -1) {};
  \node[bcastr] (er1) at (6.8, -1) {};
  \node[endorsement,r] at (6.4, -2) {};
  \node[bcastr] (er2) at (6.4, -2) {};
  \node[endorsement,r] at (7, -4) {};
  \node[bcastr] (er4) at (7, -4) {};
  \draw[msg] (er1) .. controls +(0.2, 0.2) .. +(0.4, 0);
  \draw[msg] (er2) .. controls +(0.2, 0.2) .. +(0.4, 0);
  \draw[msg] (er4) .. controls +(0.2, 0.2) .. +(0.4, 0);
  \draw[msg] (er1) -- (7.3, -2) {};
  \draw[msg] (er1) -- (8, -4) {};
  \draw[msg] (er2) -- (7.5, -1) {};
  \draw[msg] (er2) -- (6.8, -4) {};
  \draw[msg] (er4) -- (7.8, -1) {};
  \draw[msg] (er4) -- (7.9, -2) {};

  \node[draw=none,rectangle,anchor=west,minimum width=60mm,color=red,fill=red!30,inner sep=3pt,rounded corners] at (2.2, -3) {};
  \node[draw=none] at (5, -3) {\scriptsize Node unavailable};
  \node[draw=none,star,star points=12,inner sep=3pt,red,fill=red!50] at (2.2, -3) {};

  \draw[draw,dashed,very thick] (\XFRONT, 0.5) -- ++(0, -5);
  \node[shape=circle,draw,inner sep=0.5,anchor=east,xshift=-5] at (\XFRONT, 0.5) {\scriptsize\ref{fig:before_eq3}};
  \node[shape=circle,draw,inner sep=0.5,anchor=west,xshift=5] at (\XFRONT, 0.5) {\scriptsize\ref{fig:after_eq3}};

  \node[draw=none] at (9, 0) {$e_{q,3}$};
  \node[endorsement,q] at (9, -3) {};
  \node[bcastq] (eq3) at (9, -3) {};
  \draw[msg] (eq3) .. controls +(0.2, 0.2) .. +(0.4, 0);
  \draw[msg] (eq3) -- (10.7, -1) {};
  \draw[msg] (eq3) -- (10.8, -2) {};
  \draw[msg] (eq3) -- (10.7, -4) {};

  \node[draw,anchor=west] at (1, -4.8) {\scriptsize $p_{4}$ is against $q$, ignoring it};

  \node[legend] at (12, 0) {\underline{Legend}};

  \node[transaction] at (\XLEGEND, -1.3) {};
  \node[legend] at (\XLEGENDTEXT, -1.3) {Transaction};

  \node[endorsement] at (\XLEGEND, -2.1) {};
  \node[legend] at (\XLEGENDTEXT, -2.1) {Endorsement};

  \node[transaction,q,draw=none] at (\XLEGEND, -2.9) {};
  \node[bcastq] at (\XLEGEND, -2.9) {};
  \node[legend] at (\XLEGENDTEXT, -2.9) {Message about $q$};

  \node[transaction,r,draw=none] at (\XLEGEND, -3.7) {};
  \node[bcastr] at (\XLEGEND, -3.7) {};
  \node[legend] at (\XLEGENDTEXT, -3.7) {Message about $r$};

\end{tikzpicture}
    \end{adjustbox}
    \caption{%
    Simplified history of messages exchanged between the four nodes in our example.
    Node $p_{1}$ first submits transaction $q$, that is endorsed by both $p_{1}$ and $p_2$ through $e_{q,1}$ and $e_{q,2}$.
    Shortly after, $p_{2}$ submits a conflicting transaction $r$ that is endorsed by 3 nodes:
    with conditions for $p_{1}$ and $p_{2}$ (see~\ref{fig:before_eq3}) and without condition for $p_{4}$.
    After a period of unavailability, $p_{3}$ broadcasts its endorsement of $q$, leading to state~\ref{fig:after_eq3}.
}\label{fig:messages}
  \end{subfigure}

  \vspace{3mm}

  \begin{subfigure}[t]{.45\textwidth}
    \centering

\begin{tikzpicture}[
  node distance=1mm,
  query/.style={rectangle,minimum size=4mm,draw=black!80,thick,xshift=5mm},
  endorsement/.style={circle,inner sep=0.5pt,draw=black,thick},
  arrow/.style={->,>=stealth,thick,out=0,in=180},
  check/.style={checkmark,draw=green,xshift=0.5mm,yshift=0.5mm},
  ]

  \node (eq1) [endorsement,q] {$e_{q,1}$};
  \node (eq2) [endorsement,q,below=of eq1] {$e_{q,2}$};
  \node (q)   [query,q,right=of eq2] {$q$};
  \begin{scope}[node distance=-0.7mm]
    \node (i)   [right=of q,xshift=15mm] {};
    \node (er1) [endorsement,r,above=of i] {$e_{r,1}$};
    \node (er2) [endorsement,r,below=of i] {$e_{r,2}$};
  \end{scope}
  \node (er4) [endorsement,r,below=of er2] {$e_{r,4}$};
  \node (r)   [query,r,right=of er2] {$r$} node at (r) [check] {};

  \draw [arrow] (eq1) to (q);
  \draw [arrow] (eq2) to (q);
  \draw [arrow,densely dashed] (q) to (er1);
  \draw [arrow,densely dashed] (q) to (er2);
  \draw [arrow] (er1) to (r);
  \draw [arrow] (er2) to (r);
  \draw [arrow] (er4) to (r);
\end{tikzpicture}

  \captionsetup{justification=centering}
  \caption{$r$ is \textsc{Applicable} given its 3 valid endorsements.}\label{fig:before_eq3}
  \end{subfigure}
  \begin{subfigure}[t]{.05\textwidth}
    \centering
\begin{tikzpicture}
  \draw[draw, dashed, very thick] (0, 0) -- ++(0, 2.7);
  \node[shape=circle,draw,inner sep=0.5,anchor=east,xshift=-5] at (0,2.5) {\scriptsize\ref{fig:before_eq3}};
  \node[shape=circle,draw,inner sep=0.5,anchor=west,xshift=5]  at (0,2.5) {\scriptsize\ref{fig:after_eq3}};
\end{tikzpicture}
  \end{subfigure}
  \begin{subfigure}[t]{.45\textwidth}
    \centering

\begin{tikzpicture}[
  node distance=1mm,
  query/.style={rectangle,minimum size=4mm,draw=black!80,thick,xshift=5mm},
  endorsement/.style={circle,inner sep=0.5pt,draw=black,thick},
  arrow/.style={->,>=stealth,thick,out=0,in=180},
  cross/.style={cross out,draw=red,line width=1mm,line cap=round,xshift=3mm,yshift=3mm},
  check/.style={checkmark,draw=green,xshift=0.5mm,yshift=0.5mm},
  red/.style={draw=red,ultra thick},
  ]

  \node (eq1) [endorsement,q] {$e_{q,1}$};
  \node (eq2) [endorsement,q,below=of eq1] {$e_{q,2}$};
  \node (eq3) [endorsement,q,below=of eq2,ultra thick] {$e_{q,3}$};
  \node (q)   [query,q,right=of eq2,ultra thick] {$q$} node at (q) [check] {};
  \begin{scope}[node distance=-0.7mm]
    \node (i)   [right=of q,xshift=15mm] {};
    \node (er1) [endorsement,r,above=of i,red] {$e_{r,1}$} node at (er1) [cross] {};
    \node (er2) [endorsement,r,below=of i,red] {$e_{r,2}$} node at (er2) [cross] {};
  \end{scope}
  \node (er4) [endorsement,r,below=of er2] {$e_{r,4}$};
  \node (r)   [query,r,right=of er2] {$r$};

  \draw [arrow] (eq1) to (q);
  \draw [arrow] (eq2) to (q);
  \draw [arrow,ultra thick] (eq3) to (q);
  \draw [arrow,ultra thick,densely dashed] (q) to (er1);
  \draw [arrow,ultra thick,densely dashed] (q) to (er2);
  \draw [arrow] (er1) to (r);
  \draw [arrow] (er2) to (r);
  \draw [arrow] (er4) to (r);
\end{tikzpicture}

  \captionsetup{justification=centering}
  \caption{Since $q$ is now \textsc{Applicable}, $e_{r,1}$ and $e_{r,2}$ are no longer valid. Hence, $r$ is no longer \textsc{Applicable}.}\label{fig:after_eq3}
  \end{subfigure}

  \caption{%
    Example of graph of conditions for transactions $q, r$ and their respective endorsements $e_{q,i}$ and $e_{r,i}$. $e_{r,1}$ and $e_{r,2}$ are conditioned by $q$,
    while other endorsements are not. We set $\omega = 3$.
    (\ref{fig:before_eq3}) shows the knowledge of correct nodes before the arrival of $e_{q,3}$ (\ref{fig:after_eq3}).
  }\label{fig:definitions}
\end{figure*}
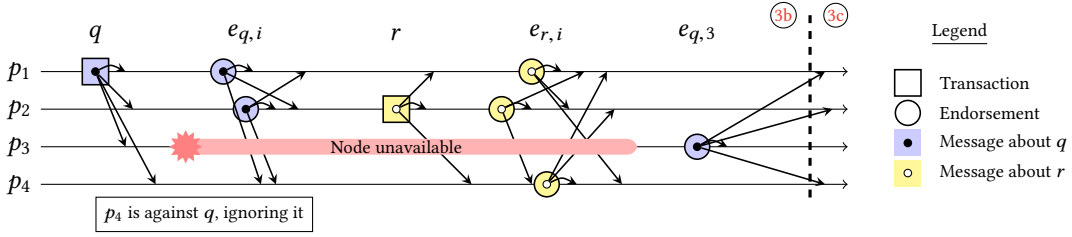
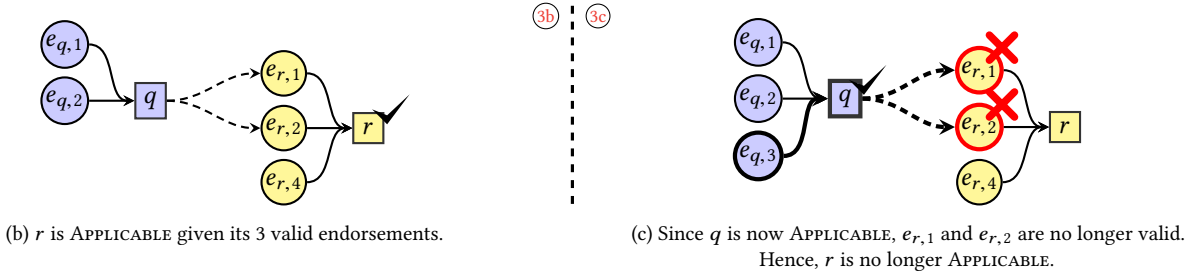

The notion of \emph{applicable transactions} (\autoref{fig:tx_state}) plays a key role in the eventual consistency of \proj{}, and is recursively defined in terms of \emph{valid endorsements}.
More precisely:
\begin{itemize}
 \item A transaction $t$ is \textsc{Applicable} at node $p_{i}$ if and only if there exists at least $\omega$ \textsc{Valid} endorsements for $t$ at node $p_{i}$, where $\omega$ is a Byzantine quorum threshold, chosen to be larger than $\lfloor \frac{n+f}{2} \rfloor$.
 \item An endorsement $e=\langle \id, i, C \rangle$ of a transaction $t=\langle \id, d, R, \Delta \rangle$ with ($e.\id = t.\id$) is \textsc{Valid} at node $p_{i}$ if and only if every transaction $c$ in the condition set $e.C$ of $e$ has an earlier deadline than $t$ and is not \textsc{Applicable}.
   A transaction deadline is set by its issuer\ and constrained by system-wide policies to avoid excessively-large deadlines.
\end{itemize}

The interplay between these two notions drives how a transaction proceeds through the state diagram of \autoref{fig:tx_state}, and is illustrated on the scenario shown in \autoref{fig:definitions}.
In this example, Nodes $p_1$ and $p_2$ propose two conflicting transactions $q$ and $r$ (\autoref{fig:messages}).
$q$ is at first only endorsed by $p_1$ and $p_2$. ($e_{x,i}$ denotes the endorsement of transaction $x$ by node $p_i$.)
When transaction $r$ is broadcast, $p_1$ and $p_2$ detect a potential conflict with $q$, which they have already endorsed, and issue \emph{conditional} endorsements for $r$. $p_4$ has not endorsed $q$: it can endorse $r$ unconditionally.

The resulting condition graph on every node at this point is shown in \autoref{fig:before_eq3}. Endorsement conditions are represented by dashed lines: for instance, $e_{r,1}$ is valid if $q$ is not \textsc{Applicable}.
In \autoref{fig:before_eq3}, $q$ has only received 2 endorsements, and is therefore not applicable under a quorum threshold of $\omega=3$. $r$ has received 3 endorsements (from $p_{1,2,4}$), all of which are valid: $e_{r,4}$ because its condition set is empty, $e_{r,1}$ and $e_{r,2}$ because $q$ is not applicable.
Transaction $r$ is therefore \textsc{Applicable}, and may be speculatively executed but cannot be \textsc{Committed} yet as $q$ has not been \textsc{Dropped}.

When a third endorsement $e_{q,3}$ for $q$ is finally received from $p_3$, the condition graph of each node changes to that of \autoref{fig:after_eq3}.
At this point, the minimum number of valid endorsements is now reached for $q$, making two endorsements for $r$ invalid.
$q$ is now \textsc{Applicable} while $r$ is no longer so.

\subsection{Algorithm}

The detail of \proj{}'s workings is presented in Algorithms~\ref{callalgo}, \ref{localalgo}, \ref{applicablealgo} and~\ref{checkstatealgo}.
Our design is reactive: endorsers and observers react to the \textsc{Transaction} and \textsc{Endorsement} messages they receive from the network.
For simplicity, we do not include authentication and invariant checks. (In the following, `line~x.y'\ refers to line~y of Algorithm~x.)

\begin{algorithm}[tb]
  \caption{Message callbacks at node $p_{i}$}\label{callalgo}
\begin{algorithmic}[1]
  \Upon[Transaction]{$\id, d, R, \Delta$}\label{line:receivingTrans}
    \State$t \gets \textsc{Transaction}\langle \id, d, R, \Delta \rangle$
    \State$done \gets \bot$
    \InlineComment{Continue until no active conflicting transaction present}
    \While{$done = \bot$}\label{retryendorse}
    \If{\Call{CanEndorse}{$t$}}\label{line:invoke-canendorse}
      \State$C \gets \{ c : c \in T_{i} \wedge \Gamma(c.\Delta, \Delta) \}$\label{conflictingtransactions}
      \If{$C = \emptyset$}\label{line:conflictingtransactions-empty}
        \InlineComment{No conflicting transaction}
        \State\Call{Endorse}{$t, \emptyset$}\label{unconditional}
        \State$done \gets \top$
        \Else
          \If{$\forall c \in C, c.d \leq now()$}\label{conditionaltrigger}
            \InlineComment{Expired conflicting transactions}
            \State\Call{Endorse}{$t, C$}\label{conditional}
            \State$done \gets \top$
          \EndIf
          \InlineComment{Otherwise, not done, going back to start of while loop}  
      \EndIf
    \Else
      \InlineComment{Unable to endorse}
      \State$done \gets \top$
    \EndIf
    \EndWhile\label{retryendorse:end}
  \EndUpon

  \Statex{}
  \Upon[Endorsement]{$\id, j, C$}\label{line:new-endorsement}
    \State$E_{i,\id} \gets E_{i,\id} \cup \{\textsc{Endorsement}\langle \id, j, C \rangle\}$
    \State$\forall t \in T_{i} : \Call{CheckState}{t}$
  \EndUpon
\end{algorithmic}
\end{algorithm}

\begin{algorithm}[tb]
\caption{Endorsement checks at node $p_{i}$}\label{localalgo}
\begin{algorithmic}[1]
  \Function{CanEndorse}{$t$}
    \If{$t.d \leq now()$}\label{nowcondition}
      \State\textbf{return} \textbf{abort}\Comment{Timeout}
    \EndIf

    \If{$\mathit{State_{i}}$ not compatible with $t.R$}
        \State\textbf{return} \textbf{abort}\Comment{Consistency}
    \EndIf

    \State$\mathit{State}' \gets t.\Delta(\mathit{State})$
    \If{$\mathit{State}'$ does not comply to $\mathit{Policy}_{i}$}
      \State\textbf{return} \textbf{abort}\Comment{Policy}\label{localpolicy}
    \EndIf

    \State\textbf{return} \textbf{OK}
  \EndFunction

  \Statex{}
  \Function{Endorse}{$t, C$}
    \State\Call{Broadcast}{\textsc{Endorsement}$\langle t.\id, i, C \rangle$}
    \State$T_{i} \gets T_{i} \cup t.\id$
  \EndFunction
\end{algorithmic}
\end{algorithm}

\begin{algorithm}[tb]
  \caption{Predicates at node $p_{i}$}\label{applicablealgo}
\begin{algorithmic}[1]
  \Function{Applicable}{$\id$}
    \State$E^{+}_{i,\id} \gets \{ e : e \in E_{i,\id} \wedge \Call{Valid}{$e$} \}$\label{recursion}
    \State\textbf{return} $|E^{+}_{i,\id}| \geq \omega$\label{quorum}
  \EndFunction

  \Function{Valid}{$e$}
    \State\textbf{return} $\forall c \in e.C, \neg \Call{Applicable}{c.\id}$\label{invalidendorsement}
  \EndFunction
\end{algorithmic}
\end{algorithm}

A client starts a set of operations by broadcasting a \textsc{Transaction}$\langle \id, d, R, \Delta \rangle$ to nodes, with a configurable deadline $d$ and a set of operations $\Delta$.
On receiving this $\textsc{Transaction}$ (line~\algref{callalgo}{line:receivingTrans}), each endorser first checks whether the transaction can be endorsed ({\sc CanEndorse}() at line~\algref{callalgo}{line:invoke-canendorse}, described in Algorithm~\ref{localalgo}).
In particular, endorsers must check that the transaction's deadline has not been reached with respect to their local clock (line~\algref{localalgo}{nowcondition}).
Endorsers can also deliberately choose \textbf{not} to endorse a transaction simply by ignoring it, for local policy reasons (line~\algref{localalgo}{localpolicy}).
If {\sc CanEndorse}() returns \emph{true}, each endorser $p_{i}$ then checks that it has not already endorsed conflicting transactions $C$  (lines~\algref{callalgo}{conflictingtransactions}-\algref{callalgo}{line:conflictingtransactions-empty}).
The predicate $\Gamma$ returns \emph{true} if the two transactions passed as arguments are in conflict.
Three cases may happen:
\begin{itemize}
  \item If no conflicting transaction  exists, $p_{e}$ can broadcast its $\textsc{Endorsement}\langle \id, i, \emptyset \rangle$ \textit{without condition} (line~\algref{callalgo}{unconditional}).
  \item If $C$ only contains outdated transactions, $p_{e}$ can broadcast a conditional $\textsc{Endorsement}\langle \id, i, C \rangle$, allowing the application of the transaction given the non applicability of every outdated transactions (line~\algref{callalgo}{conditional}).
  \item Otherwise, $p_{e}$ must wait until conflicting deadlines are over, and restarts the while loop (line~\algref{callalgo}{retryendorse}).
\end{itemize}

New endorsements are received at line~\algref{callalgo}{line:new-endorsement}, and trigger the execution of the {\sc CheckState}() function (described in Algorithm~\ref{checkstatealgo}) on all transactions already endorsed by the receiving endorser ($T_i$ set).
{\sc CheckState}() ensures that the state of the datastore $\mathit{State}_{i}$ is consistent with the \textsc{Applicable} state of transactions (lines~\algref{checkstatealgo}{rollback} and~\algref{checkstatealgo}{apply}).
It also triggers the \textsc{Commit} operation on transactions $q$ when there are a sufficient number of unconditional endorsements on $t$ (line~\algref{checkstatealgo}{commit}).
Finally, the procedure can decide to trigger checkpoints when conditions are blocking newer transactions (represented by the \textsc{Old} trigger, tested at line~\algref{checkstatealgo}{oldcall}).

More specifically, once a node has received a predefined quorum $\omega$ of valid and distinct endorsements for a given transaction $t$ (implemented by the {\sc Applicable}() and {\sc Valid}() functions in Algorithm~\ref{applicablealgo}, invoked at line~\algref{checkstatealgo}{line:calling-applicable}), {\sc CheckState}() applies $t.\Delta$ if the node is configured to execute applicable transactions speculatively (line~\algref{checkstatealgo}{apply}).
Coming back to~\autoref{fig:tx_state}, it means that the transaction moves either to the \textsc{Applicable} state (if the node is not speculative) or the \textsc{Applied} state otherwise.

We must ensure that $\omega > \left \lfloor \frac{n+f}{2} \right \rfloor$ to tackle Byzantine endorsements.
Higher $\omega$ values allow to build stricter transaction acceptance rules, requiring up to unanimous agreement ($\omega = n$), but this comes at the cost of availability by depending on Byzantine nodes to endorse transactions.
(The minimum number of nodes to allow both availability and safety is $n \geq 3f + 1$~\cite{Malkhi1998a}.)

The \textsc{CheckState}() function is also used to verify the validity of previously-valid endorsements because of \textit{endorsement conditions} (line~\algref{applicablealgo}{invalidendorsement}),
potentially triggering transaction rollback(s) (line~\algref{line:calling-applicable}{rollback}, as illustrated in~\autoref{fig:definitions}).
A transaction can move back and forth from its initial \textsc{Pending} state to the \textsc{Applicable} state.
It is important to note that those states are \emph{local}: each node may have a different view of \textsc{Applicable} transactions depending on the messages it has received.
However, our safety property guarantees that no transaction can both be \textsc{Committed} at a correct node $p_{i}$ and \textsc{Dropped} at another correct node $p_{j}$.
Conversely, if a transaction reaches one of those two final states at a correct node $p_{i}$, every other correct node will eventually set the same state for that transaction.
We revisit these points in \autoref{sec:proofs}, where we formally prove some of \proj{}'s key properties.

\subsection{Checkpointing}

In many cases, we expect that a node can conclude from received endorsements that the \textsc{Applicable} predicate has reached a final state (true or false) by analyzing the transaction's graph of conditions.
When complex dependencies arise between endorsements and transactions, some transactions might however interlock.
As an example in~\autoref{fig:definitions}, nodes cannot know whether $r$ must be committed before receiving $e_{q,3}$.
To cope with this issue and ensure both liveness and consistency, we use a simple checkpoint sub-protocol (Algorithm~\ref{checkpointalgo}) to prune the condition graph and unblock transactions.
This sub-protocol builds upon an underlying \emph{Byzantine Veto Procedure} (BVP) in which each node $p_{i}$ proposes a choice $c_{i} \in \{0, 1\}$ and decides a final value $d_{i}$.
BVP is a Byzantine-tolerant version of the Non-Blocking Atomic Commitment (NBAC) protocol~\cite{Babaoglu1993}, and is expected to satisfy the following properties \emph{with eventually-synchronous communications}:
1) \textbf{\emph{Termination}}: every correct node eventually decides on a value;
2) \textbf{\emph{Agreement}}: no two correct nodes decide on different values; and
3) \textbf{\emph{Validity}}: if a correct node decides 1, then all correct nodes proposed 1 (equivalently, if any correct node proposes 0, then a correct node decides 0).
We return to the implementation details of BVP in section~\ref{sec:implementation}.

\begin{algorithm}[tb]
  \caption{State checking at node $p_{i}$}\label{checkstatealgo}
\begin{algorithmic}[1]
  \Function{CheckState}{$t$}\label{func:checkstate}
    \If{$\neg\textsc{Applicable}(t)$}\label{line:calling-applicable}
      \If{$\textsc{Applied}(t)$}
        \State$\mathit{State}_{i} \gets \Call{Rollback}{\mathit{State}_{i}, t}$\label{rollback}
      \EndIf
    \Else
      \If{$\neg\textsc{Applied}(t)$ \textbf{and} $\mathit{isSpeculative}_i$}
        \State$\mathit{State}_{i} \gets \Call{Apply}{\mathit{State}_{i}, t}$\label{apply}\Comment{Speculative execution}
      \EndIf

      \InlineComment{Endorsements that will always stay valid}
      \State$\Sigma_{i,t} = \{ e \in E_{i,t} : e.C = \emptyset \}$\label{commit}
      \If{$|\Sigma_{i,t}| \geq \omega$}
        \State\textbf{if} $\neg\textsc{Applied}(t)$ \textbf{then}
          $\mathit{State}_{i} \gets \Call{Apply}{\mathit{State}_{i}, t}$
        \State$\mathit{State}_{i} \gets \Call{Commit}{\mathit{State}_{i}, t}$
        \State$T_{i} \gets T_{i} \setminus \{t\}$
      \EndIf

      \InlineComment{Conditions that could be dropped}
      \State$\bar{T} \gets \{ \forall e \in E_{i, t}, c \in e.C : \textsc{Old}(c) \}$\label{oldcall}
      \If{$|\bar{T}| \geq 1$}
        \State\Call{StartCheckpoint}{$\bar{T}$}
      \EndIf
    \EndIf
  \EndFunction

  \Statex{}
  \InlineComment{Example of checkpoint trigger for configurable delay $\delta$}
  \Function{Old}{$t$}
    \State\textbf{return} $\neg\textsc{Applicable}(t) \wedge t.d < (now() - \delta)$
  \EndFunction
\end{algorithmic}
\end{algorithm}

\begin{algorithm}[tb]
  \caption{Checkpoint at node $p_{i}$}\label{checkpointalgo}
\begin{algorithmic}[1]
  \Function{StartCheckpoint}{$\bar{T}$}
    \State\Call{Broadcast}{\textsc{Checkpoint}$\langle \bar{T} \rangle$}\label{broadcastcheckpoint}
  \EndFunction

  \Statex{}
  \Upon[Checkpoint]{$\bar{T}$}\label{line:msg:checkpoint}
    \State$c \gets \left\{\begin{matrix*}[l]
      0~\emph{if}~\exists t \in \bar{T} :
      \textsc{Applicable}(t) \vee \textsc{Committed}(t)
      \\
      1~\emph{otherwise}
    \end{matrix*}\right.$\label{choice}
    \State$\mathit{decision} \gets \Call{BVP}{\bar{T}, c}$\label{line:startBVP}

    \Statex{}
    \If{$\mathit{decision} = 1$}\Comment{Cleanup}
      \State$T_{i} \gets T_{i} \setminus \bar{T}$\label{pruning-start}
        \Comment{Drop transactions}
      \State$\forall t \in \bar{T} : E_{i, t} = \emptyset$\label{forgetendorsements}
        \Comment{Forget endorsements of dropped transactions}   
      \State$\forall t \in T_{i}, e \in E_{i, t} : e.C = e.C \setminus \bar{T}$\label{pruning-end}\label{forgetconditions}
        \Comment{Forget conditions}
      \State$\forall t \in T_{i} \cup \bar{T} : \Call{CheckState}{t}$
    \EndIf
  \EndUpon
\end{algorithmic}
\end{algorithm}

When a node decides to start a checkpoint, it triggers a BVP instance with a \textsc{Checkpoint} proposal (line~\algref{checkpointalgo}{line:startBVP}), a set of transactions representing a cut of their graph of conditions.
Each proposal aims at removing old transactions that block newer transactions from being committed.
Informally, a proposal might be as simple as \textit{``transaction $t$ will never be applicable, drop it''}.
During the procedure, correct nodes are expected to propose $0$ (``Veto'') if and only if they hold evidence that the checkpoint proposal is wrong (line~\algref{checkpointalgo}{choice}).
(Such nodes must submit this evidence in the form of signed endorsements.)
Two checkpoint results are possible per invocation:
(1) If the final decision is $1$, correct nodes can prune their local graph of conditions according to the confirmed proposal (lines~\algref{checkpointalgo}{pruning-start}-\algref{checkpointalgo}{pruning-end});
(2) otherwise, some correct nodes have reasons for blocking the checkpoint proposal.
After having added the evidence(s) to their graph of conditions, correct nodes can discard this checkpoint instance.

In our example from figure~\ref{fig:before_eq3}, if the BVP decision on the proposal \textit{``drop $q$''} is $1$,
then every node can confidently drop $q$ and remove $q$'s condition on the endorsements $e_{r,i}$, thus effectively commiting $r$.
On the contrary, if the BVP decision is 0, correct nodes can expect an evidence going against the proposal:
for instance, node $p_{3}$ can broadcast $e_{q,3}$ again.
This allows nodes to progress, finally triggering the commit of $q$ and the drop of $r$ for every node.
We discuss and evaluate the overhead of this checkpoint procedure in~\autoref{sec:checkpoints_eval}.

\subsection{Eventual consistency: proofs}\label{sec:proofs}

We first show that every correct node eventually obtains the same set of applicable transactions.
We then show that transactions entering the final committed and dropped states will stay in these states for every correct node.

\begin{lemma}[Acyclic conditions]\label{lem:acyclic}
  Let $q, r$ be two transactions with $q.d \leq r.d$.
  There cannot be any \textsc{Endorsement}$\langle q.\id, i, C \rangle$ broadcasted by a correct node $p_{i}$ with $r \in C$.
\end{lemma}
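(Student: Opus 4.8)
The plan is to follow the one and only code path through which a correct node can publish a \emph{conditional} endorsement of $q$, and to read off the deadline constraints that this path enforces. First I would note that a correct node $p_i$ emits an \textsc{Endorsement} message only via the \textsc{Endorse} function of Algorithm~\ref{localalgo}, which in turn is called from just two lines of Algorithm~\ref{callalgo}: the unconditional branch (line~\ref{unconditional}), where the condition set handed to \textsc{Endorse} is $\emptyset$, and the conditional branch (line~\ref{conditional}), where it is the set $C$ of already-endorsed transactions conflicting with $q$ computed at line~\ref{conflictingtransactions} (with $T_i$, the transactions $p_i$ has endorsed so far, only ever growing). Since we assume $r \in C$, the endorsement in question cannot come from the unconditional branch, so it must have been produced at line~\ref{conditional}.

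Next I would extract the two guards that necessarily held, for $p_i$, in the loop iteration reaching line~\ref{conditional}. On the one hand, \textsc{CanEndorse}$(q)$ returned \textbf{OK} at that iteration (line~\ref{line:invoke-canendorse}); by its timeout check (line~\ref{nowcondition}) this forces $q.d > now()$. On the other hand, the guard of line~\ref{conditionaltrigger} held for $C$, i.e.\ $c.d \le now()$ for every $c \in C$; as $r \in C$, this gives $r.d \le now()$. Treating the two clock reads in this atomic handler iteration as the same value $now()$, we obtain $r.d \le now() < q.d$, hence $r.d < q.d$, which contradicts the hypothesis $q.d \le r.d$. Consequently no correct node ever broadcasts such an endorsement, which is exactly the claim.

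The one delicate point I expect is the handling of \emph{time}: \textsc{CanEndorse} and the guard of line~\ref{conditionaltrigger} are formally two separate evaluations of $now()$, and since $now()$ is non-decreasing one should justify that they may be read at the same instant (the callback body runs atomically). An equivalent, clock-monotonicity-proof argument is available from the structure of the \texttt{while} loop (lines~\ref{retryendorse}--\ref{retryendorse:end}): while some transaction of $C$ is not yet expired the node keeps looping without endorsing, and under the hypothesis $q.d \le r.d$ the moment $r$ first expires ($now() \ge r.d \ge q.d$) is also a moment at which \textsc{CanEndorse}$(q)$ aborts at line~\ref{nowcondition}, so the conditional branch is never taken. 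All remaining facts --- that $r$ can belong to $C$ only once $p_i$ has itself endorsed $r$ and $\Gamma(r.\Delta, q.\Delta)$ holds, etc.\ --- are immediate and not needed for the deadline contradiction.
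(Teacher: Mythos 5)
Your proof is correct and follows essentially the same route as the paper's: isolate line~\algref{callalgo}{conditional} as the only source of conditional endorsements, combine the guard of \textsc{CanEndorse} (line~\algref{localalgo}{nowcondition}, giving $q.d > now()$) with the guard of line~\algref{callalgo}{conditionaltrigger} (giving $r.d \leq now()$), and conclude $r.d < q.d$, contradicting $q.d \leq r.d$. Your added remark on the two separate reads of $now()$ --- and the monotonicity-based fallback via the \texttt{while} loop --- addresses exactly the ``local operations execute instantaneously'' assumption the paper makes explicitly, so it is a welcome strengthening rather than a divergence.
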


\begin{proof}
In our algorithm, the only case where conditional endorsements are broadcasted is at line~\algref{callalgo}{conditional}.
For this line to be executed, \textsc{CanEndorse}$(q)$ must have returned \textbf{OK}.
Hence, per line~\algref{localalgo}{nowcondition}, we must have $q.d > now()$.
Given line~\algref{callalgo}{conditionaltrigger}, every element $r$ of $C$ must fulfill $r.d \leq now()$.
If we assume that local operations execute instantaneously, the value of $p_{i}$'s local clock $now$ shall be the same in the two constraints.
We have $\forall r \in C, r.d < q.d$.
\end{proof}

We can use the result of this lemma to filter incoming endorsements at each node, and detect Byzantine behavior.
In the following, we suppose that every malformed endorsement has correctly been filtered by correct nodes.

\begin{proposition}[Termination]\label{lem:termination}
  Assuming the BVP protocol terminates, every proposed functions and callbacks terminate.
\end{proposition}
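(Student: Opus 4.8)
The statement claims that, assuming the Byzantine Veto Procedure terminates, every function and callback defined in Algorithms~\ref{callalgo}--\ref{checkpointalgo} terminates. The plan is to enumerate the procedures and argue termination for each, working bottom-up from the leaf predicates to the top-level message callbacks, relying on the recursive structure already constrained by \autoref{lem:acyclic}. The procedures to cover are: the leaf checks \textsc{CanEndorse}, \textsc{Endorse}, \textsc{Valid}, \textsc{Applicable}, \textsc{Old}; the mutually recursive pair \textsc{CheckState} / \textsc{Applicable} / \textsc{Valid}; the \textsc{StartCheckpoint} and \textsc{Checkpoint} handlers (which invoke \textsc{BVP}); and finally the two message callbacks \textsc{Transaction} and \textsc{Endorsement}.

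First I would dispatch the easy cases. \textsc{CanEndorse}, \textsc{Endorse}, and \textsc{Old} contain no loops and no recursion (modulo the \textsc{Applicable} call inside \textsc{Old}, handled next), so they terminate provided the finitely many operations they perform — state-compatibility checks, applying $t.\Delta$, a broadcast, a set union — terminate, which we assume. Next I would tackle the heart of the argument: the mutual recursion between \textsc{Applicable} and \textsc{Valid}. \textsc{Applicable}$(\id)$ iterates over the finite set $E_{i,\id}$ and for each endorsement $e$ calls \textsc{Valid}$(e)$, which in turn calls \textsc{Applicable}$(c.\id)$ for each $c \in e.C$. I would define a measure on transactions — the deadline $t.d$ — and invoke \autoref{lem:acyclic}: every condition $c \in e.C$ of a valid (well-filtered) endorsement of $t$ satisfies $c.d < t.d$. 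Since we assume Byzantine endorsements violating this have already been filtered (as the paragraph after \autoref{lem:acyclic} states), every recursive descent strictly decreases the deadline. As deadlines are drawn from a discrete/bounded domain and the set of transactions known to a node is finite at any point, the recursion tree has bounded depth and finite branching, hence is finite: \textsc{Applicable} and \textsc{Valid} terminate.

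With \textsc{Applicable} terminating, \textsc{Old} terminates. Then \textsc{CheckState}$(t)$: it makes finitely many calls (\textsc{Applicable}, \textsc{Applied}, \textsc{Rollback}, \textsc{Apply}, \textsc{Commit}), iterates over the finite sets $E_{i,t}$ and conditions therein to build $\Sigma_{i,t}$ and $\bar T$, and may call \textsc{StartCheckpoint}$(\bar T)$ — which merely broadcasts and returns. Crucially \textsc{CheckState} is \emph{not} self-recursive in Algorithm~\ref{checkstatealgo}, so it terminates. The \textsc{Checkpoint} handler computes the choice $c$ in finite time (a scan over $\bar T$ invoking the now-terminating \textsc{Applicable} and \textsc{Committed}), then calls \textsc{BVP}, which terminates by hypothesis; on decision $1$ it performs finitely many set updates and then a finite batch of \textsc{CheckState} calls (one per transaction in the finite set $T_i \cup \bar T$), each terminating by the above. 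The \textsc{Endorsement} callback adds one element to $E_{i,\id}$ and runs \textsc{CheckState} once per transaction in the finite set $T_i$; done. The one genuine subtlety is the \textsc{while} loop in the \textsc{Transaction} callback (lines~\algref{callalgo}{retryendorse}--\algref{callalgo}{retryendorse:end}): it loops back whenever $C$ contains a conflicting transaction that is not yet past its deadline. Here I would argue that this is the expected behavior of a blocking \textbf{wait}, and that termination follows because deadlines are finite and $now()$ advances — eventually either every $c \in C$ satisfies $c.d \le now()$ (so the conditional branch at line~\algref{callalgo}{conditional} fires) or $t.d \le now()$ (so \textsc{CanEndorse} aborts at line~\algref{localalgo}{nowcondition} and the \textbf{else} branch sets $done \gets \top$); in both cases the loop exits after finitely many iterations. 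The main obstacle is precisely making this last point rigorous without an explicit progress/clock model — I expect the authors to either appeal to the bounded-deadline policy assumption plus eventually-synchronous clocks, or to treat the loop as syntactic sugar for a single blocking wait that resumes at most once, and I would follow whichever framing they adopt.
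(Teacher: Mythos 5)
Your proposal is correct and follows essentially the same route as the paper's proof: the while loop in the \textsc{Transaction} callback is exited via the deadline/timeout clause of \textsc{CanEndorse}, the mutual recursion between \textsc{Applicable} and \textsc{Valid} is shown to be well-founded by the strictly decreasing deadlines guaranteed by Lemma~\ref{lem:acyclic} (together with the standing assumption that malformed endorsements have been filtered), and the remaining handlers reduce to finitely many calls plus the assumed termination of BVP. Your version is somewhat more explicit in enumerating the procedures and in flagging the clock-progress assumption behind the loop exit, but it introduces no new idea beyond what the paper's proof sketch already uses.
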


\begin{proof}
This is trivial for functions \textsc{CanEndorse} and \textsc{Endorse} in Algorithm~\ref{localalgo}.
When receiving a \textsc{Transaction} message at line~\algref{callalgo}{line:receivingTrans}, a node may execute the while loop (lines~\algref{callalgo}{retryendorse}-\algref{callalgo}{retryendorse:end}) several times if conflicts are detected.
A node is, however, guaranteed to exit the while loop when \textsc{CanEndorse} returns false because the transaction's deadline is over (timeout clause).
Upon reception of the messages \textsc{Endorsement} (line~\algref{callalgo}{line:new-endorsement}), and \textsc{Checkpoint} (line~\algref{checkstatealgo}{line:msg:checkpoint}), and for the function \textsc{CheckState} (line~\algref{checkstatealgo}{func:checkstate}), the termination is conditioned by the termination of the Binary Veto Procedure (BVP) and the \textsc{Applicable} predicate.

BVP terminates by assumption.
The case of \textsc{Applicable} is somewhat more involved, as the functions \textsc{Applicable} and \textsc{Valid} recursively call each other.
Every \textsc{Applicable} call of transaction $r$ will trigger a finite number of \textsc{Valid} calls on endorsements $e \in E_{i,r.\id}$ received for $r$ (line~\algref{applicablealgo}{recursion}).
Each \textsc{Valid} call will in turn call a finite number of \textsc{Applicable} call for every $c \in e.C$ (line~\algref{applicablealgo}{invalidendorsement}).
Because of Lemma~\ref{lem:acyclic}, we know that for two transactions $q, r$ with $q.d \leq r.d$, there can be no \textsc{Endorsement}$\langle q.\id, i, C \rangle$ with $r \in C$ in any $E_{j,q.\id}$ set.
This property implies that the recursion between the two predicates will call \textsc{Applicable} with transactions ordered by decreasing deadline $q.d$, thus eliminating any loop.
\end{proof}

\begin{lemma}[Local safety]\label{lem:local}
  Let $p_{i}$ be a correct node and $q, r$ two transactions with $q \neq r$:
  $$\Gamma(q.\Delta, r.\Delta) \Rightarrow \neg (\textsc{Applicable}(q)~\wedge~\textsc{Applicable}(r))$$
\end{lemma}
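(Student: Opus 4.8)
The plan is to argue by contradiction, combining a Byzantine quorum-intersection argument with a close reading of the behaviour of a single correct endorser.

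Suppose $\Gamma(q.\Delta, r.\Delta)$ holds and yet both $q$ and $r$ are \textsc{Applicable} at the correct node $p_{i}$. By the definition of \textsc{Applicable} (line~\algref{applicablealgo}{quorum}), $p_{i}$ then holds at least $\omega$ \textsc{Valid} endorsements for $q$ and at least $\omega$ for $r$, each coming from a distinct endorser (identified by the field $e.i$). First I would invoke quorum intersection: since $\omega > \lfloor\frac{n+f}{2}\rfloor$, any two sets $A, B$ of at least $\omega$ endorsers among the $n$ nodes satisfy $|A \cap B| \geq 2\omega - n > f$, so their intersection contains at least one correct node $p_{j}$. Hence this single correct node $p_{j}$ broadcast endorsements $e_{q,j}$ and $e_{r,j}$ that are both currently \textsc{Valid} at $p_{i}$.

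Next I would unfold how the correct $p_{j}$ produced those two endorsements, using Algorithms~\ref{callalgo} and~\ref{localalgo}. Let $x$ denote whichever of $q, r$ that $p_{j}$ endorsed first and $y$ the other. When $p_{j}$ later handled $y$ in the \textsc{Transaction} loop of Algorithm~\ref{callalgo}, the transaction $x$ was still recorded in $T_{j}$, so---because $\Gamma(x.\Delta, y.\Delta)$---the conflict set $C$ computed at line~\algref{callalgo}{conflictingtransactions} contained $x$ and was non-empty; $p_{j}$ therefore could not take the unconditional branch (line~\algref{callalgo}{unconditional}) and instead broadcast $e_{y,j}$ with $x \in e_{y,j}.C$ via line~\algref{callalgo}{conditional} (which incidentally forces $x.d \leq now() < y.d$, consistently with Lemma~\ref{lem:acyclic}). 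But then \textsc{Valid}$(e_{y,j})$ at $p_{i}$ (line~\algref{applicablealgo}{invalidendorsement}) requires $\neg\textsc{Applicable}(x)$ at $p_{i}$, contradicting that both $q$ and $r$---hence $x$---are \textsc{Applicable} at $p_{i}$. This is the contradiction I am after.

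The step I expect to be the real obstacle is the sentence ``$x$ was still recorded in $T_{j}$ when $p_{j}$ endorsed $y$''. It can fail if $p_{j}$ first removes $x$ from $T_{j}$---either by committing it (line~\algref{checkstatealgo}{commit}) or by dropping it during a checkpoint (Algorithm~\ref{checkpointalgo})---and only afterwards endorses $y$ unconditionally. I would close this gap with two observations: (i) a commit of $x$ at a correct node leaves $x$ permanently \textsc{Applicable} there, since the $\omega$ \emph{unconditional} endorsements that justified the commit never become invalid; and (ii) a checkpoint can decide $1$ on a proposal containing $x$ only if, by the \emph{Validity} of BVP, no correct node held $x$ \textsc{Applicable} or \textsc{Committed} at proposal time, while \emph{Agreement} forces the correct $p_{i}$ to have cleared its own endorsements for $x$, so $x$ could not already be \textsc{Applicable} again at $p_{i}$. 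Feeding either fact back through a second quorum intersection---against the unconditional endorsers of $x$, taken if necessary at the earliest correct node to commit $x$ so as to keep the chain well-founded---should regenerate the same contradiction. Making this last step fully rigorous, rather than the routine intersection and algorithm-unfolding above, is where the proof needs the most care.
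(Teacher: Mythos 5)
Your argument is essentially the paper's: the paper likewise reduces the claim to the fact that no correct node can contribute a valid endorsement to both quorums and then derives the contradiction $\omega \le \frac{n+f}{2}$ by counting ($\omega \le |E^{+}_{i,r}| \le n - |E^{+}_{i,q}| + f$), which is the same Byzantine quorum-intersection bound you invoke, with Lemma~\ref{lem:acyclic} covering the case where the shared correct endorser handled $r$ before $q$. The one point where you go beyond the paper --- worrying that $x$ may already have been removed from $T_{j}$ by a commit or a checkpoint before $p_{j}$ endorses $y$, so that the conflict set computed in Algorithm~\ref{callalgo} no longer contains $x$ --- is a genuine subtlety that the paper's proof sketch elides with the bare assertion that ``correct nodes cannot send both kinds of endorsements,'' so the obstacle you flag (and only partially close) is a gap in the paper's own argument rather than a defect specific to yours.
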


\begin{proof}
Without loss of generality, we suppose $q.d \leq r.d$ and that \textsc{Applicable}(q) and \textsc{Applicable}(r) hold at correct node $p_{i}$.
Let us note $C_{\emptyset}$ a condition set such that $q \notin C_{\emptyset}$.
Given Lemma~\ref{lem:acyclic} and instruction~\algref{applicablealgo}{invalidendorsement}, the only endorsements for $q$ and $r$ that are valid for $p_{i}$ are
$E^{+}_{i,q} = \{\textsc{Endorsement}\-\langle q, j, C_{\emptyset} \rangle\}$
and
$E^{+}_{i,r} = \{\textsc{Endorsement}\langle r, j, C_{\emptyset} \rangle\}$
, for any endorser $p_{j}$.
According to line~\algref{applicablealgo}{quorum}, we must have $\omega \leq |E^{+}_{i,q}|$.
Since correct nodes cannot send both kind of endorsements per algorithm~\autoref{callalgo}, we also must have $\omega \leq |E^{+}_{i,r}| \leq n - |E^{+}_{i,q}| + f \leq n - \omega + f$, or more simply $\omega \leq \frac{n+f}{2}$.
This leads to a contradiction with $\left \lfloor \frac{n + f}{2}  \right \rfloor + 1 \leq \omega$.
\end{proof}

\begin{lemma}[Eventual consistency]\label{lem:global}
  After enough time, for two correct nodes $\{i, j\}$, if \textsc{Applicable}(q) holds at node $p_{i}$, then \textsc{Applicable}(q) must hold at node $p_{j}$.
\end{lemma}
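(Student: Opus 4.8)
The plan is to argue by contradiction, combining the reliable BFT-broadcast assumption with the structural results already established, in particular Lemma~\ref{lem:acyclic} and the recursion structure exposed in the proof of Proposition~\ref{lem:termination}. First I would fix a transaction $q$ for which $\textsc{Applicable}(q)$ holds at correct node $p_{i}$ after ``enough time'', and suppose toward a contradiction that there is a correct node $p_{j}$ at which $\textsc{Applicable}(q)$ does not hold (and stays that way, i.e.\ after all messages currently in flight have been delivered). The key observation is that the predicate $\textsc{Applicable}(q)$ at a node depends only on (a) the set $E_{i,q.\id}$ of endorsements the node has received for $q$, and (b) recursively, the $\textsc{Applicable}$-status of the transactions appearing as conditions in those endorsements --- and by Lemma~\ref{lem:acyclic} all such conditions have strictly smaller deadline, so the recursion is well-founded and we may induct on the deadline ordering.

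The induction I would carry out is: for every transaction $t$, every correct node eventually agrees on the value of $\textsc{Applicable}(t)$, proceeding from smallest deadline upward. The base/step is the same argument. By the reliable BFT-broadcast property, every $\textsc{Endorsement}$ message that any correct node has received is eventually received by every correct node; so after enough time all correct nodes share the same endorsement sets for $t$, \emph{except} that checkpointing (Algorithm~\ref{checkpointalgo}) may have pruned some of them --- here I would invoke BVP Agreement to note that all correct nodes reach the same checkpoint decisions and hence perform the same prunings on lines~\algref{checkpointalgo}{pruning-start}--\algref{checkpointalgo}{pruning-end}, so their post-checkpoint endorsement and condition sets again coincide. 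Now $\textsc{Valid}(e)$ for an endorsement $e$ of $t$ is $\forall c\in e.C,\ \neg\textsc{Applicable}(c.\id)$, and every such $c$ has $c.d < t.d$ by Lemma~\ref{lem:acyclic}, so by the induction hypothesis all correct nodes already agree on $\textsc{Applicable}(c.\id)$; therefore they agree on $\textsc{Valid}(e)$ for each shared $e$, hence on $|E^{+}_{i,t}|\ge\omega$, hence on $\textsc{Applicable}(t)$. Applying this with $t=q$ contradicts the assumption that $p_{i}$ and $p_{j}$ disagree.

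The main obstacle is the interaction with checkpointing: $\textsc{Applicable}$ is not monotone (transactions move back and forth between \textsc{Pending} and \textsc{Applicable}, as the example of Figure~\ref{fig:definitions} shows), so ``eventually agree'' must be read as ``agree once the system quiesces'' --- i.e.\ once every endorsement and every checkpoint message that will ever be sent has been delivered and processed by every correct node. I would therefore state the convergence relative to that quiescent configuration, lean on BVP Termination and Agreement to guarantee that only finitely many checkpoints fire on a given finite set of conflicting transactions and that they fire identically at all correct nodes, and lean on Lemma~\ref{lem:acyclic} (well-foundedness of the condition DAG) to make the upward induction on deadlines go through. The remaining steps --- that shared inputs to the deterministic predicates $\textsc{Applicable}$/$\textsc{Valid}$ yield shared outputs --- are routine given Proposition~\ref{lem:termination}'s analysis of the recursion.
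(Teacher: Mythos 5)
Your proposal follows essentially the same route as the paper's proof sketch: reliable broadcast plus eventual synchrony give all correct nodes the same endorsement sets, BVP Agreement guarantees identical pruning at checkpoints, and determinism of \textsc{Applicable} then forces identical outcomes. Your explicit well-founded induction on deadlines (via Lemma~\ref{lem:acyclic}) and the remark about quiescence make rigorous a step the paper leaves implicit, but the argument is the same in substance.
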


\begin{proof}
Given the reliable broadcast and the eventual synchrony assumptions, every correct node will receive the same set of endorsements.
Since we have $E_{i,q.\id} = E_{j,q.\id}$, the value of \textsc{Applicable}(q) must be the same for $p_{i}$ and $p_{j}$ nodes before checkpointing.
During every checkpoint, endorsement sets and conditions may be modified by lines~\algref{checkpointalgo}{forgetendorsements} and~\algref{checkpointalgo}{forgetconditions}.
Given the agreement property of the BVP, every correct node will prune their graph of conditions according to the confirmed proposal, or no node will do.
\end{proof}

\begin{proposition}[Durability]\label{prop:durability}
  The proposed algorithm ensures that if a transaction $t$ is \textsc{Committed} (resp. \textsc{Dropped}) at a correct node $p_{i}$, it will stay in this state and will eventually be \textsc{Committed} (resp. \textsc{Dropped}) for every other correct node.
\end{proposition}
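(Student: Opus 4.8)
The plan is to treat the \textsc{Committed} and \textsc{Dropped} cases separately and then to show that they cannot occur for the same transaction at two correct nodes, leaning on the reliable-broadcast assumption, the Termination/Agreement/Validity properties of the Byzantine Veto Procedure (BVP), and Lemma~\ref{lem:global} for the eventual agreement on the \textsc{Applicable} predicate that feeds the veto test.

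For the \textsc{Committed} case I would first observe that the only place a transaction $t$ is moved to \textsc{Committed} is inside \textsc{CheckState} (Algorithm~\ref{checkstatealgo}), and only once $|\Sigma_{i,t}|\ge\omega$, i.e.\ once $p_i$ holds $\omega$ \emph{unconditional} endorsements of $t$. Such endorsements are vacuously \textsc{Valid} (line~\algref{applicablealgo}{invalidendorsement}) and are removed from $E_{i,t}$ only when $t$ itself is dropped in a checkpoint. So, modulo ruling out that $t$ is ever dropped (done in the third step), the quorum once reached stays reached, and since \autoref{fig:tx_state} has no edge leaving \textsc{Committed}, $t$ remains \textsc{Committed} at $p_i$. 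For propagation, each of those $\omega$ endorsements was delivered to the correct node $p_i$, hence by reliable broadcast is eventually delivered to every correct node $p_j$; the delivery triggers a re-evaluation of \textsc{CheckState}$(t)$ (line~\algref{callalgo}{line:new-endorsement}), at which point $p_j$ also sees $|\Sigma_{j,t}|\ge\omega$ and commits $t$.

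For the \textsc{Dropped} case, a transaction is moved to \textsc{Dropped} only in the \textsc{Checkpoint} handler (Algorithm~\ref{checkpointalgo}), in the $\mathit{decision}=1$ branch of a BVP instance run on some set $\bar T\ni t$. Because \textsc{Checkpoint}$\langle\bar T\rangle$ is reliably broadcast (line~\algref{checkpointalgo}{broadcastcheckpoint}), every correct node joins that BVP instance (line~\algref{checkpointalgo}{line:startBVP}); by Termination and Agreement they all decide $1$ and all execute the pruning lines~\algref{checkpointalgo}{pruning-start}--\algref{checkpointalgo}{pruning-end}, so $t$ is dropped at every correct node. Stability then follows because $t$'s identifier leaves $T_j$ and $E_{j,t}$ is emptied, so \textsc{CheckState}$(t)$ is never invoked again other than the single in-handler sweep over $\bar T$, which for $t$ finds $\neg\textsc{Applicable}(t)$ (its endorsements were just cleared); hence $t$ can no longer reach \textsc{Committed}, and \autoref{fig:tx_state} has no edge leaving \textsc{Dropped}.

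The third and delicate step is mutual exclusion: $t$ cannot be \textsc{Committed} at a correct $p_i$ while \textsc{Dropped} at a correct $p_j$ --- this also discharges the ``$t$ is never dropped'' obligation left open in the first step. Here I would use BVP Validity: a drop of $t$ needs some instance on $\bar T\ni t$ to decide $1$, which forces \emph{every} correct node to have proposed $1$, i.e.\ (line~\algref{checkpointalgo}{choice}) to have had neither \textsc{Applicable}$(t)$ nor \textsc{Committed}$(t)$ when it processed the corresponding \textsc{Checkpoint}. Now, a correct node commits $t$ only after accumulating $\omega$ unconditional endorsements; by reliable broadcast those become known to all correct nodes and make $t$ \textsc{Applicable} everywhere (Lemma~\ref{lem:global}); and crucially a checkpoint on a set containing $t$ is only ever started once $t$ is \textsc{Old}, i.e.\ its deadline is past by the margin $\delta$, whereas correct endorsers only endorse $t$ before that deadline --- so, under the eventually-synchronous network assumed for BVP, by the time any such checkpoint is processed every correct endorser's endorsement of $t$ has already been delivered everywhere. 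Consequently, if $t$ is ever committed at some correct node, then already before any such checkpoint some correct node sees $t$ \textsc{Applicable} (or \textsc{Committed}) and proposes $0$, so by Validity and Agreement the instance decides $0$ and $t$ is not dropped --- a contradiction. The main obstacle I anticipate is making exactly this timing argument airtight: reconciling the asynchronous BVP with the commit rule so that no correct node commits $t$ concurrently with a checkpoint that is about to drop it. Once that is pinned down, the remainder is bookkeeping over Algorithms~\ref{checkstatealgo}--\ref{checkpointalgo} and \autoref{fig:tx_state}.
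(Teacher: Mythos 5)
Your proposal follows essentially the same route as the paper's proof: the same split into the commit case (reliable broadcast of the $\omega$ unconditional endorsements plus the argument of Lemma~\ref{lem:global}) and the drop case (BVP Agreement forcing all correct nodes to prune identically), with mutual exclusion resting on the veto rule of line~\algref{checkpointalgo}{choice} together with BVP Validity. The one difference is that the paper's proof is terser and does not attempt the timing argument you rightly flag as delicate --- it simply asserts that once a correct node has \textsc{Committed} $t$ no further checkpoint on $t \in \bar{T}$ can decide $1$, leaving the race between an in-flight checkpoint and a concurrent commit to be closed by the concrete BVP implementation (the deadline cutoff and the continued re-evaluation of \textsc{Applicable} in Algorithm~\ref{bvp}) rather than by the abstract Termination/Agreement/Validity interface your argument works from.
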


\begin{proof}
The \textsc{Committed} state is triggered in the \textsc{CheckState} routine when a transaction obtains a number of unconditional endorsements $k \geq \omega$ (line~\algref{checkstatealgo}{commit}), either after receiving a new endorsement or after a successful checkpoint.
With the same arguments than Lemma~\ref{lem:global}, we know that in the first case every correct node will \textsc{Commit} $t$.
The second case is covered by the underlying BVP during a checkpoint: after a successful checkpoint ($decision = 1$), every correct node prunes its conditions graph in the same way, thus eventually triggering \textsc{Commit} operations on every correct node (resp. \textsc{Drop}, line~\algref{checkpointalgo}{pruning-start}).
Per line~\algref{checkstatealgo}{rollback}, operating a \textsc{Rollback} on a transaction $t$ is only possible if \textsc{Applicable}($t$) does not hold anymore.
Since we know that at least $\omega$ unconditional endorsements for $t$ have been received at this point, the only way that the predicate would not hold is due to a successful checkpoint on $t \in \bar{T}$, with the ``endorsement-forgetting'' operation depicted at line~\algref{checkpointalgo}{forgetendorsements}.
However, given line~\algref{checkpointalgo}{choice}, if at least one correct node has \textsc{Committed} $t$, no further checkpoint on $t \in \bar{T}$ can return a decision of 1.
No \textsc{Dropped} transaction could become \textsc{Applicable} again in correct nodes due to the pruning of endorsements after a successful checkpoint.
\end{proof}

It follows from Lemmas~\ref{lem:local} and~\ref{lem:global} that no two conflicting operations can be committed in different orders by two correct nodes, thus ensuring \emph{eventual consistency} with Proposition~\ref{prop:durability}.
Another core feature of our algorithm is
the ability for correct nodes to \textit{reject any transaction without giving their reasons}, as underlined in line~\algref{localalgo}{localpolicy}.
We formally define this property as the system's \textit{fairness}.

\begin{proposition}[Fairness]\label{theo:fairness}
  If no majority of correct nodes $k > \left \lfloor \frac{n-f}{2} \right \rfloor$ endorsed a transaction $t$, \textsc{Applicable}($t$) will never hold at any correct node.
\end{proposition}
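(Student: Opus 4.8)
The plan is to use a simple counting argument: bound from above the number of endorsements for $t$ that any correct node can ever hold, and show this bound stays strictly below the quorum threshold $\omega$.

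First I would recall the two relevant facts from the protocol. By construction $\omega$ satisfies $\left\lfloor \frac{n+f}{2} \right\rfloor + 1 \leq \omega$, and by the definition of \textsc{Applicable} (line~\algref{applicablealgo}{quorum}), the predicate \textsc{Applicable}($t$) can hold at a correct node $p_{i}$ only when $|E^{+}_{i,\id}| \geq \omega$, where $E^{+}_{i,\id} \subseteq E_{i,\id}$ is the set of \textsc{Valid} endorsements received for $t$. Since $E^{+}_{i,\id} \subseteq E_{i,\id}$, it is enough to bound $|E_{i,\id}|$, counting at most one endorsement per endorser — as ensured by the duplicate and authentication checks that \autoref{sec:protocol} omits for brevity (this matches the ``valid and distinct endorsements'' wording used there).

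Next I would split the endorsers of $t$ into correct and Byzantine ones. Under the hypothesis of the proposition, at most $\left\lfloor \frac{n-f}{2} \right\rfloor$ correct nodes ever endorse $t$; and there are at most $f$ Byzantine nodes, each contributing at most one counted endorsement. Hence $|E_{i,\id}| \leq \left\lfloor \frac{n-f}{2} \right\rfloor + f$. Because $f$ is an integer, $\left\lfloor \frac{n-f}{2} \right\rfloor + f = \left\lfloor \frac{n-f}{2} + f \right\rfloor = \left\lfloor \frac{n+f}{2} \right\rfloor < \left\lfloor \frac{n+f}{2} \right\rfloor + 1 \leq \omega$. Therefore $|E^{+}_{i,\id}| \leq |E_{i,\id}| < \omega$, the quorum test at line~\algref{applicablealgo}{quorum} never succeeds, and \textsc{Applicable}($t$) never holds at $p_{i}$. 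As $p_{i}$ is an arbitrary correct node, the statement follows.

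The only delicate point is the ``at most one endorsement per endorser'' step: the raw pseudo-code stores \textsc{Endorsement} messages in a set, so a single Byzantine node could in principle inflate $E_{i,\id}$ with several messages bearing distinct condition sets, and the bound genuinely relies on the omitted sanity checks that keep one endorsement per node. I would also be careful to read the hypothesis as a bound that holds throughout the run: a correct node never retracts an endorsement, but further correct nodes could later endorse $t$, in which case the proposition becomes vacuous. Everything else — the constraint on $\omega$ and the floor identity — is elementary arithmetic.
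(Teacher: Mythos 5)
Your proof is correct and follows essentially the same route as the paper's: both are the counting argument that at most $\left\lfloor \frac{n-f}{2} \right\rfloor$ correct endorsers plus at most $f$ Byzantine ones yield at most $\left\lfloor \frac{n+f}{2} \right\rfloor < \omega$ endorsements, so the quorum test can never pass. Your version merely spells out the floor identity and the one-endorsement-per-endorser caveat that the paper's two-line proof leaves implicit.
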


\begin{proof}
Let $\epsilon$ be the number of endorsements for $t$.
We must have $\omega \leq \epsilon \leq \left \lfloor \frac{n-f}{2} \right \rfloor + f$ for $q$ to be \textsc{Applicable}, which is impossible due to our definition of $\omega$.
\end{proof}

\newpage{}
\section{Implementation}\label{sec:implementation}

We have implemented \proj{} in Go~\cite{TheGoAuthors,PnyxDBSourceCode}.
In this section, we describe our implementation of node authentication using a web of trust,
along with practical solutions for the assumed algorithm primitives, namely the \textit{Reliable Broadcast} and the \textit{Byzantine Veto Procedure} (BVP).

Our technical choices were driven by the common size of consortia and state of the art cryptography techniques.
Hence, we decided to target a scale of several hundreds to thousands of nodes per network, excluding clients.

\subsection{Web of trust and policy files}\label{sec:wot}

Nodes need to be authorized to participate in a closed consortium.
The Hyperledger Fabric consortium blockchain proposes a centralized approach, where a single authority gives cryptographic certificates to network members~\cite{Androulaki2018}.
However, a corrupted authority may introduce a large number of malicious nodes in the network, potentially breaking the assumption on the maximum number of faulty nodes $f$.

Our implementation relies instead on a web of trust and policy files, inspired from PGP~\cite{Abdul-Rahman1997}.
The web of trust is used to link nodes' identities with their public key, providing a sound authentication mechanism.
Our implementation supports several cryptographic authentication schemes, and uses the recognized fast ed25519 procedure by default~\cite{Bernstein}.

Nodes need to know the identities of \emph{endorsers}, along with useful metadata such as authorized operations and default network parameters.
We use a \emph{universal policy file} for this, and we expect nodes to agree on the content of this policy file: this is similar to the distribution of a common \textit{genesis} file required by a number of existing BFT systems~\cite{Bessani2014,Buchman2016,Androulaki2018}.
Classic \proj{} transactions could be used to update the universal policy in a consistent and democratic way, for instance as done in the Tezos Blockchain~\cite{Goodman2014}.

\subsection{Reliable broadcast and recovery}\label{sec:gossip}

A Byzantine-resilient reliable broadcast is required in \proj{} to ensure that correct nodes will eventually receive every transaction and endorsement, possibly out-of-order.
Such an algorithm was proposed by Bracha~\cite{Bracha1987}, but it has a message complexity of $O(n^{2})$, which makes it impractical for our targeted scale.
Based on current public and consortium blockchains implementations~\cite{Bano}, we propose a probabilistic gossip algorithm as our reliable broadcast primitive,
where each node communicates only with a small number of neighbors to lower the total message complexity.
Such algorithms are known to disseminate information with a logarithmic number of messages and are used in popular BFT public and consortium blockchain networks.
We selected GossipSub~\cite{JeromyJohnson} from the libp2p project as our gossip broadcast algorithm.
Libp2p is a popular set of libraries for peer-to-peer communication, that targets gossip networks of 10000 nodes with practical Byzantine Fault Tolerance.
It comes with standard mechanisms for inter-node communication and authentication that fulfill our specifications, and supports our default ed25519 authentication scheme.

Using a gossip algorithm as our broadcast primitive inherently introduces uncertainty in the reliability of the broadcast~\cite{Guerraoui2019}.
We propose to complement this probabilistic broadcast with \textit{retransmissions} and \textit{state transfers}: with very low probability, some nodes may not receive a given message.
In that case, they may later ask for a retransmission of a transaction or endorsements related to a transaction.
After long failures (such as power outage or network partition), some nodes may have missed a large number of messages and become out-of-sync with the remainder of the network.
At this point, retransmitting every message becomes prohibitively expensive: that's why each node is able to synchronize its complete state from its neighbors.
We rely on the web of trust (\autoref{sec:wot}) to retrieve the state from neighbors that are sufficiently trusted by the out-of-sync node.
(In our implementation, a configurable quorum of identical values must be received before re-synchronizing one node's state.)

\subsection{Binary Veto Procedure}\label{sec:bvp_impl}

The main issue with our endorsement scheme is that Byzantine nodes can arbitrarily delay their endorsements.
To cope with that limitation in a practical way, we propose a BVP implementation in Algorithm~\ref{bvp}, based on periodic health probes of the gossip mechanism in our eventually synchronous network.

\begin{definition}\label{def:bound}
  The \emph{maximum gossip broadcast latency}, denoted $\tau$, is the maximum possible delay from a message broadcast to its delivery by \emph{every} correct node.
\end{definition}

We make the following two assumptions:
every correct node $p_{i}$ is able to estimate ($\mathcal{A}$) $\hat{\tau}$ such as $\hat{\tau} \geq \tau$ and ($\mathcal{B}$) $\delta_{i,j}$ the relative clock deviation for \emph{any} endorser $p_{j}$.
In practice, it is possible to obtain these two values from passive round-trip measurements in the gossip network.
(We note that under asynchrony, $\tau = \delta_{i,j} = \infty$.)
With that additional knowledge, each correct node can estimate locally the earliest possible sending time of a message, and discard the messages published after a specific deadline (line~\algref{bvp}{discard}).
This simple approach is sound during periods of synchrony, but may introduce significant delays due to use of a deadline. As BVP is not the main contribution of this paper, we leave the optimization of this primitive to future work.

\algnewcommand{\IIf}[1]{\State\algorithmicif\ #1\ \algorithmicthen}
\algnewcommand{\IElse}[1]{\State\algorithmicelse\ }

\begin{algorithm}[tb]
  \caption{Byzantine Veto Procedure (BVP) at node $p_{i}$}\label{bvp}
\begin{algorithmic}[1]
  \Function{BVP}{$\bar{T}, c_{i}$}
    \If{$c_{i} = 0$}\Comment{$p_i$ is vetoing the decision to drop $\bar{T}$}
      \State\textbf{return 0} \label{veto_fast}
    \EndIf{}

    \State\textbf{wait until} $\hat{\tau} < \infty \wedge \forall{j}, \delta_{i,j} < \infty$ \Comment{Wait for synchrony}\label{synchrony}
    \State$\text{deadline} = \max( t.d, t \in \bar{T} ) + \max(\delta_{i,j})$ \label{deadline}
    \State Stop delivering endorsements for $t \in \bar{T}$ sent \textbf{after} $\text{deadline}$ \label{discard}
    \Wait{}
      \State$\exists t \in \bar{T} : \textsc{Applicable}(t)$ \textbf{then return} 0\label{veto_slow}
      \State$now() > \text{deadline} + \hat{\tau}$ \textbf{then return} 1 \label{timeout}
    \EndUpon{}
  \EndFunction{}
\end{algorithmic}
\end{algorithm}

\begin{proposition}
  Algorithm~\ref{bvp} satisfies the properties of BVP.
\end{proposition}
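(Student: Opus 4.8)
The goal is to prove that Algorithm~\ref{bvp} satisfies the three BVP properties---Termination, Agreement, and Validity---under eventual synchrony. The plan is to handle the three properties in turn, exploiting the structure of the algorithm: a node either vetoes immediately (returns $0$ at line~\algref{bvp}{veto_fast}) or enters a synchronized waiting phase that terminates either with a late-discovered veto (line~\algref{bvp}{veto_slow}) or with a timeout acceptance (line~\algref{bvp}{timeout}).

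\textbf{Validity.} I would argue both directions. If a correct node decides $1$, it must have reached the timeout branch at line~\algref{bvp}{timeout}, meaning it never observed any $t \in \bar{T}$ becoming \textsc{Applicable} up to time $\text{deadline} + \hat{\tau}$, and in particular its initial choice was $c_i = 1$. The key observation is that any correct node $p_j$ that proposed $0$ did so because it held signed endorsement evidence making some $t \in \bar{T}$ \textsc{Applicable} or \textsc{Committed} (line~\algref{checkpointalgo}{choice}); by reliable broadcast and the definition of $\tau \le \hat\tau$ (Definition~\ref{def:bound}), and because the deadline accounts for clock skew via $\max(\delta_{i,j})$, that evidence reaches every correct node before its own timeout fires, forcing it to return $0$ at line~\algref{bvp}{veto_slow}. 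Hence if any correct node proposes $0$, no correct node decides $1$; contrapositively, a $1$ decision implies all correct nodes proposed $1$.

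\textbf{Agreement.} Two correct nodes cannot decide differently. Suppose $p_i$ decides $1$ and $p_j$ decides $0$. A $0$ decision by $p_j$ arises either from $c_j = 0$ (handled by the Validity argument above---then $p_i$ would also see the veto evidence in time and return $0$) or from line~\algref{bvp}{veto_slow}, i.e. $p_j$ observed some $t \in \bar{T}$ become \textsc{Applicable}. That can only happen because $p_j$ received $\ge \omega$ valid endorsements for $t$; by reliable broadcast those same endorsements propagate within $\tau \le \hat\tau$, and---crucially---an endorsement for $t$ that was broadcast before $\text{deadline}$ is delivered to $p_i$ before $\text{deadline} + \hat\tau$ by Definition~\ref{def:bound}, so it is not discarded by the filter of line~\algref{bvp}{discard}. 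By Lemma~\ref{lem:global} (eventual consistency of \textsc{Applicable}), $p_i$ then also sees $t$ become \textsc{Applicable} before its timeout, contradicting its $1$ decision. The delicate point I need to nail down is the timing: an endorsement that causes \textsc{Applicable}($t$) at $p_j$ must, after accounting for the sender's clock offset bounded by $\delta$ and transit bounded by $\tau$, arrive at $p_i$ strictly before $\text{deadline} + \hat\tau$---this is exactly why the deadline is set to $\max_t t.d + \max_j \delta_{i,j}$ and the timeout uses an extra $\hat\tau$.

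\textbf{Termination.} Each correct node eventually decides. If $c_i = 0$ it returns immediately. Otherwise it reaches line~\algref{bvp}{synchrony} and, under eventual synchrony, the guard $\hat\tau < \infty \wedge \forall j, \delta_{i,j} < \infty$ becomes true (assumptions $\mathcal{A}$ and $\mathcal{B}$), so the wait completes; the deadline at line~\algref{bvp}{deadline} is then finite, and the \textbf{wait until} block is guaranteed to fire at the latest when $now()$ passes $\text{deadline} + \hat\tau$ (line~\algref{bvp}{timeout}), so the node decides. I expect the main obstacle to be the Agreement/Validity timing argument: I must state the synchrony bounds precisely enough (which broadcasts are covered by $\tau$, how $\delta_{i,j}$ enters the earliest-sending-time estimate, and why the $+\hat\tau$ slack suffices) so that every endorsement responsible for making a transaction \textsc{Applicable} at one correct node is provably delivered---and not filtered out---at every other correct node before its own timeout expires. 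The Termination argument is routine once the synchrony assumptions are invoked.
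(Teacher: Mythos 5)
Your proposal is correct and follows essentially the same route as the paper's own proof: termination from the eventual-synchrony guard, and agreement/validity from the fact that all correct nodes compute the same deadline, discard endorsements sent after it, and (by Definition~\ref{def:bound} and assumptions $\mathcal{A}$, $\mathcal{B}$) receive the identical set of endorsements for $\bar{T}$ by $\text{deadline}+\hat\tau$, so they evaluate \textsc{Applicable} identically at lines~\algref{bvp}{veto_slow} and~\algref{bvp}{timeout}. You are somewhat more explicit than the paper about the fast-veto path and the bidirectional reading of Validity, but the underlying argument is the same.
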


\begin{proof}
  \textbf{\emph{Termination}} is trivial in eventually synchronous networks (line~\algref{bvp}{synchrony}).
  Per assumptions $\mathcal{A}$ and $\mathcal{B}$ every correct node will compute the same value for `$\text{deadline}$' at line~\algref{bvp}{deadline}.
  By line~\algref{bvp}{discard}, no endorsement for $t \in \bar{T}$ sent after this shared $\text{deadline}$ can be accepted.
  Thanks to assumption $\mathcal{A}$, endorsements sent before `$\text{deadline}$' are delivered before $(\text{deadline} + \hat{\tau}) > (\text{deadline} + \tau)$,
  leading to the same set of endorsements for $\bar{T}$ being received for every correct node after ($\text{deadline} + \hat{\tau}$).
  This implies the \textbf{\emph{Agreement}} property given the decisions of lines~\algref{bvp}{veto_slow} and~\algref{bvp}{timeout}.
  A correct node proposes a veto \emph{if and only if} at least one transaction in $\bar{T}$ is \textsc{Applicable}: \textbf{\emph{Validity}} follows from the properties of \textsc{Applicable}.
\end{proof}

\section{Evaluation}\label{sec:evaluation}

\subsection{Experimental setup}

We tested our implementation of \proj{} in two different environments: an emulation setup, and a global network using Amazon Web Services (AWS).
The emulation was performed on a server able to sustain several hundreds of nodes with the Mininet~\cite{MininetTeam} network emulation tool (48 threads of Intel(R) Xeon(R) Gold 6136 CPU at 3.00GHz with 188GB of RAM).
We used Mininet~\cite{MininetTeam} to isolate nodes from each other and to simulate real network latencies.
We drew latency values from an exponential distribution law with an average of $20$ ms per link.
Every node's clock was shifted by a random amount in the $[-5, 5]$ seconds interval between the reference time to simulate a relatively small asynchrony between network participants.
For the BVP algorithm, we chose the conservative value $\hat{\tau} = 10$ seconds: this leads to a practical checkpoint timeout of 20 second.
Each experiment was run 40 times, taking the average as the final result.

\subsection{Baseline}

To compare our work with available solutions, we executed the same experiments with \BFTSMART v1.2 server~\cite{Bessani2014} and a Tendermint v0.32.5 voting application~\cite{Buchman2016}.
\BFTSMART has been recognized as an efficient Java library for the BFT problem, and is being added in a number of applications, including Hyperledger Fabric~\cite{Androulaki2018,Sousa2018}.
Tendermint is a BFT Consensus mechanism based on a permissioned blockchain with a leader-based algorithm; its implementation relies on a gossip broadcast primitive, like \proj{}.
Both implementations allow custom application logic to be executed during consensus; this empowered us to emulate a voting behavior within these two existing solutions.
The two systems are leader-based, but their consensus choices are quite different: while \BFTSMART rely on a single leader as long as it reports no issue to avoid costly view changes, Tendermint leaders are selected in a round-robin fashion with each leader batching transactions into blocks.
\BFTSMART is based on a fully connected mesh topology whereas Tendermint nodes communicate via gossip.
The two baselines offer a different trade-off than our proposal, targeting stronger consistency guarantees \emph{but with no native democratic capabilities}.
(For fair comparison, we configured Tendermint with the ``skip timeout commit'' option to optimize its commit latency.)

\subsection{Network size ($n$)}

\begin{figure}
  \centering
  \begin{tikzpicture}
  \pgfdeclareplotmark{--}{%
    \pgfpathmoveto{\pgfqpoint{-4pt}{0pt}}
    \pgfpathlineto{\pgfqpoint{4pt}{0pt}}
    \pgfusepathqstroke
  }
  \tikzset{every mark/.append style={scale=0.7}}
  \begin{axis}[
      small,
      ybar,
      bar width=6pt,
      height=0.23\textwidth,
      width=0.45\textwidth,
      major grid style={densely dotted},
      ymajorgrids=true,
      legend columns=4,
      legend style={at={(1.05, 1.05)},anchor=south east,font=\scriptsize,draw=none},
      xlabel={Number of endorsers $n$},
      ylabel={$95^{th}$ percentile\\latency (s)},
      ylabel style={align=center},
      xtick distance=1,
      ytick distance=1,
      xticklabel={\pgfmathparse{10 * 2^(\tick - 1)}\pgfmathprintnumber{\pgfmathresult}},
      ymin=0,
      ymax=4,
    ]

    \addplot [red,pattern=north east lines,pattern color=red]
      table [x expr=\lineno,y expr=\thisrow{l95}/1000,col sep=comma] {varying_n_bftsmart_csv.tex};
    \addlegendentry{\BFTSMART}

    \addplot [green!60!black,pattern=crosshatch dots,pattern color=green!80!black]
      table [x expr=\lineno,y expr=\thisrow{l95}/1000,col sep=comma] {varying_n_tendermint_csv.tex};
    \addlegendentry{Tendermint}

    \addplot [blue,fill=blue!30!white]
      table [x expr=\lineno,y expr=\thisrow{l95}/1000,col sep=comma] {varying_n_ours_csv.tex};
    \addlegendentry{\proj{} (ours)}

    \addplot [mark=--,line width=1pt,black,sharp plot,update limits=false,draw=none,line legend,xshift=-8pt]
      table [x expr=\lineno,y expr=\thisrow{l50}/1000,col sep=comma] {varying_n_bftsmart_csv.tex};
    \addplot [mark=--,line width=1pt,black,sharp plot,update limits=false,draw=none,line legend]
      table [x expr=\lineno,y expr=\thisrow{l50}/1000,col sep=comma] {varying_n_tendermint_csv.tex};
    \addplot [mark=--,line width=1pt,black,sharp plot,update limits=false,draw=none,line legend,xshift=8pt]
      table [x expr=\lineno,y expr=\thisrow{l50}/1000,col sep=comma] {varying_n_ours_csv.tex};
    \addlegendentry{Median latency}

    \addplot [red,mark=text,text mark={$\oslash$},draw=none,sharp plot,update limits=false,xshift=-6pt] coordinates {(4, 0.2) (5, 0.2) (6, 0.2)};
    \addplot [green!60!black,mark=text,text mark={$\oslash$},draw=none,sharp plot,update limits=false] coordinates {(6, 0.2)};

    \node [anchor=north west,font=\scriptsize] at (axis cs:5.05,4) {$\uparrow$16.0};
    \draw [->,black] (rel axis cs:0.04,0.95) -- node [anchor=west,font=\scriptsize] {Better} (rel axis cs:0.04, 0.7);

  \end{axis}
\end{tikzpicture}
  \caption{Single transaction commit latency with increasing number of endorsers nodes ($n$) and emulated WAN latencies.
  The $\oslash$ symbols mean that we were unable to perform the experiment for a specific $n$ due to network contention.
  \proj{} clearly offers best network scalability.\label{fig:varying_n}}
\end{figure}
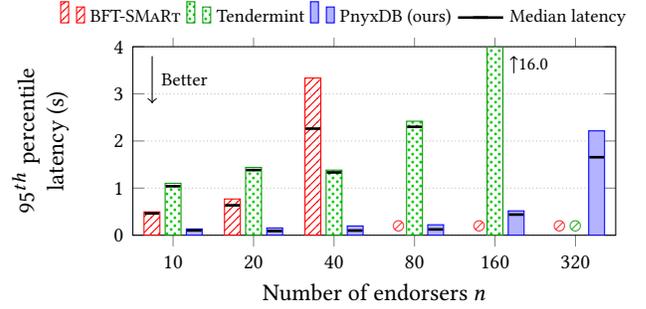

This first experiment measures the latency from a single transaction submission to its commit by every node.
We set the required number of endorsers to $\textstyle\omega = \left \lfloor \frac{2}{3}n \right \rfloor + 1$ and increased $n$ from 10 to 320.
For completeness, we note that setting $\omega = n$ (unanimous agreement) had the effect of slightly increasing the latency,
since nodes had to wait for more votes before committing any transaction.
As denoted by the $\oslash$ symbols, we were unable to complete some large network experiments for \BFTSMART ($n \geq 80$) and Tendermint ($n \geq 320$) in our testbed, due to extremely high CPU and network load.
\autoref{fig:varying_n} shows that \proj{} outperforms existing implementations for small and large networks by an order of magnitude.

\subsection{Number of clients}

To measure the effect of client scaling, we configured a various number of clients to submit transactions at an average rate of 2 transactions per second, as controlled by a Poisson point process.
The transactions were generated using the Yahoo! Cloud Serving Benchmark (YCSB)~\cite{Cooper2010}, a well recognized non-relational datastore testing tool that allowed us to vary the ratio of conflicting transactions, and hence the contention level on \proj{}.
We customized the benchmark workload to create only \textit{update} transactions to a set of 100 keys, from which updated keys were selected using a uniform distribution.
This relatively low level of contention reflects a number of real workloads, but we present some results for higher contention rates in~\autoref{sec:contention_effect}.
Additionally, each tested network required a quorum of $\omega = 7$ endorsers among $n = 10$ to tolerate at most $f = 3$ faulty nodes.

The transaction commit latencies and throughput are shown in \autoref{fig:varying_clients}.
While Tendermint and \proj{} were able to deal with up to 30 transaction commits per second, \BFTSMART was quickly saturated with client transactions: this is due to the large number of messages emitted during the successive rounds of consensus, and our realistic setup with realistic network latencies.
\proj{} performed well for the very large majority of transactions, providing an order of magnitude of latency improvement compared to Tendermint, and approached the optimal throughput while ensuring a low number of dropped transactions.
As summarized in~\autoref{table:evaluation}, \BFTSMART ensured that no single transaction was dropped.
However, Tendermint nodes were unable to commit around 9.3\% of transactions:
from our understanding, some nodes failed to keep their state synchronized with the network and gave up processing transactions.
\proj{} experienced less than 2.3\% of transaction drop on average.

\begin{figure}[tb]
  \centering
\pgfplotsset{select coords between index/.style 2 args={
    x filter/.code={
        \ifnum\coordindex<#1\def\pgfmathresult{}\fi
        \ifnum\coordindex>#2\def\pgfmathresult{}\fi
    }
}}
\tikzstyle{every pin}=[fill=white,draw=gray,font=\footnotesize\color{gray}]

\begin{tikzpicture}

  \begin{axis}[
      name=latency,
      enlarge x limits=false,
      small,
      height=0.22\textwidth,
      width=0.45\textwidth,
      major grid style={densely dotted},
      xmajorgrids=true,
      ymajorgrids=true,
      legend columns=3,
      legend style={at={(0.5, 1.25)},anchor=north,font=\scriptsize,draw=none},
      xmin=1,
      xmax=20,
      xtick distance=2,
      ylabel={$95^{th}$ percentile\\latency (s)},
      ylabel style={align=center},
      ymin=0,
      ymax=10,
      error bars/error bar style=solid,
      mark options={solid,scale=1},
    ]

    \addplot [red,line width=1.5pt,style=densely dotted,update limits=false,
    error bars/.cd,y dir=both,y explicit]
    table [x=clients,y expr=\thisrow{l95}/1000,y error expr=\thisrow{err}/1000,col sep=comma]
    {varying_clients_bftsmart_csv.tex};
    \addlegendentry{\BFTSMART}

    \addplot [green!60!black,line width=1.5pt,style=densely dashed,update limits=false,
    error bars/.cd,y dir=both,y explicit]
    table [x=clients,y expr=\thisrow{l95}/1000,y error expr=\thisrow{err}/1000,col sep=comma]
    {varying_clients_tendermint_csv.tex};
    \addlegendentry{Tendermint}

      \addplot+ [blue,mark=asterisk,error bars/.cd,y dir=both,y explicit]
    table [x=clients,y expr=\thisrow{l95}/1000,y error expr=\thisrow{err}/1000,col sep=comma]
    {varying_clients_ours_csv.tex};
    \addlegendentry{\proj{}}

    \draw [->,black] (axis cs:3,9) -- node [anchor=west,font=\scriptsize] {Better} (axis cs:3,7);
  \end{axis}

  \begin{axis}[
      name=throughput,
      at={($(latency.south)-(0,0.7cm)$)},
      anchor=north,
      enlarge x limits=false,
      small,
      height=0.22\textwidth,
      width=0.45\textwidth,
      major grid style={densely dotted},
      xmajorgrids=true,
      ymajorgrids=true,
      xlabel={Number of clients},
      xmin=1,
      xmax=20,
      xtick distance=2,
      ymin=0,
      ymax=30,
      ylabel={Throughput\\(commits/s)},
      ylabel style={align=center},
      mark options={solid,scale=1},
    ]

    \addplot+ [draw=blue,mark=asterisk,error bars/.cd,y dir=plus,y explicit]
    table [x=clients,y=ours,col sep=comma]
    {throughput_csv.tex};

    \addplot [red,line width=1.5pt,style=densely dotted,update limits=false,
    select coords between index={0}{11}]
    table [x=clients,y=bftsmart,col sep=comma]
    {throughput_csv.tex};

    \addplot [green!60!black,line width=1.5pt,style=densely dashed,update limits=false,
    error bars/.cd,y dir=both,y explicit]
    table [x=clients,y=tendermint,col sep=comma]
    {throughput_csv.tex};

    \addplot+ [mark=none,draw=gray,domain=0:15,empty legend] {2*x};
    \node [gray,fill=white,rotate=24.75,inner sep=1pt,font=\tiny] at (axis cs:13,26) {Optimal};
    \draw [->,black] (rel axis cs:0.04,0.95) -- node [anchor=west,font=\scriptsize] {Better} (rel axis cs:0.04, 0.7);
  \end{axis}
\end{tikzpicture}
  \caption{{} Commit latency and  throughput with increasing load and contention, each client submitting 2 transactions per second from 100 records selected by YCSB. \proj{} can scale with the number of clients while offering best latencies and good throughput ($n=10$ and $\omega=7$).}\label{fig:varying_clients}
\end{figure}

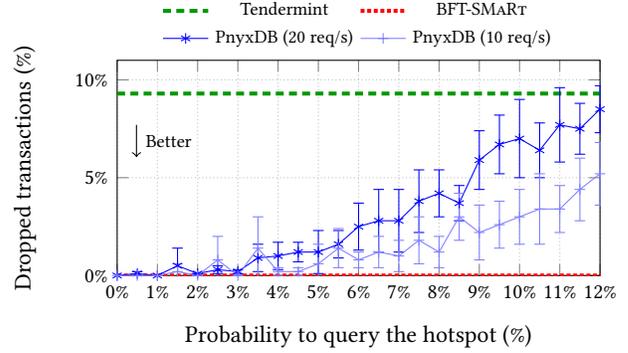
\begin{figure}[tb]
  \centering
  \tikzstyle{every pin}=[fill=white,draw=gray,font=\footnotesize\color{gray}]
\begin{tikzpicture}
  \begin{axis}[
      name=hotspot,
      enlarge x limits=false,
      small,
      height=0.25\textwidth,
      width=0.45\textwidth,
      major grid style={densely dotted},
      xmajorgrids=true,
      ymajorgrids=true,
      legend columns=2,
      legend style={at={(0.5, 1.01)},anchor=south,font=\scriptsize,draw=none},
      xlabel={Probability to query the hotspot (\%)},
      xticklabel={\pgfmathprintnumber{\tick}\%},
      xmin=0,
      xmax=12,
      xtick distance=1,
      ylabel={Dropped transactions (\%)},
      yticklabel={\pgfmathprintnumber{\tick}\%},
      ymin=0,
      ymax=11,
    ]

    \addplot [green!60!black,line width=1.5pt,style=densely dashed,update limits=false] coordinates {(0, 9.3) (20, 9.3)};
    \addlegendentry{Tendermint}

    \addplot [red,line width=1.5pt,style=densely dotted,update limits=false] coordinates {(0, 0) (20, 0)};
    \addlegendentry{\BFTSMART}

    \addplot+ [blue,mark=asterisk,error bars/.cd,y dir=both,y explicit]
    table [x=hotspot,y expr=(1000-\thisrow{committed_queries})/10,y error expr=\thisrow{err}/10,col sep=comma]
    {varying_hotspot_ours_csv.tex};
    \addlegendentry{\proj{} (20 req/s)}

    \addplot+ [blue!50!white,mark=+,error bars/.cd,y dir=both,y explicit]
    table [x=hotspot,y expr=(500-\thisrow{committed_queries})*0.2,y error expr=\thisrow{err}*0.2,col sep=comma]
    {varying_hotspot_ours_1s_csv.tex};
    \addlegendentry{\proj{} (10 req/s)}

    \draw [->,black] (rel axis cs:0.04,0.7) -- node [anchor=west,font=\scriptsize] {Better} (rel axis cs:0.04, 0.55);


  \end{axis}


\end{tikzpicture}
  \caption{Drop rate analysis for increasing YCSB hotspot selection ratio. As expected for \proj{}, the drop rate increases with the hotspot contention.}\label{fig:varying_hotspot}
\end{figure}

\begin{table*}
  \centering
  \captionsetup{justification=centering}
  \caption{Summary of comparison with emulated network latencies and 10 clients for a total of 1000 transactions.\\
  This is the average over 40 experiments with 10 nodes tolerating up to 3 Byzantine faults ($n = 10, \omega = 7$).}\label{table:evaluation}
  {\footnotesize
  \begin{tabular}{r|cc|cc|cc|c|c|c|c|c}
  \toprule
    & \multicolumn{2}{|c|}{\textbf{Average}} & \multicolumn{2}{|c|}{\textbf{95$^{th}$ perc.}} &
    \multicolumn{2}{|c|}{\textbf{Throughput}} & \textbf{Drop} & \textbf{Disk usage} & \textbf{Transfer} & \textbf{Exp.} & \textbf{Bandwidth per node} \\

    & \multicolumn{2}{|c|}{\textbf{latency}} & \multicolumn{2}{|c|}{\textbf{latency}} &
    \multicolumn{2}{|c|}{} & \textbf{rate} & \textbf{per node} & \textbf{per node} & \textbf{duration} & \textbf{average / max} \\[\medskipamount]

    \BFTSMART~\cite{Bessani2014}  & 89 s              &        & 170 s             &        & 5.68 tx/s             &      & \textbf{0\%}  & -               & 36 MB          & 230 s & 0.16 / 0.21 MB/s\\
    Tendermint~\cite{Buchman2016} & \underline{1.7 s} &        & \underline{3.9 s} &        & \underline{17.0 tx/s} &      & 9.3\%         & 26 MB           & 26 MB          & 65 s  & 0.40 / 1.40 MB/s\\
    \proj{} (ours)                & \textbf{0.15 s}   &$\div$11& \textbf{0.16 s}   &$\div$24& \textbf{18.6 tx/s}    &+9.4\%& 2.3\%         & \textbf{1.4 MB} & \textbf{20 MB} & 71 s  & 0.28 / 1.27 MB/s\\
  \bottomrule
  \end{tabular}
  }
\end{table*}

\subsection{Effect of contention}\label{sec:contention_effect}

We increased the level of transaction contention by rising a ``hotspot'' hit probability (\autoref{fig:varying_hotspot}), one option provided by YCSB.
This parameter has an immediate effect on the probability that (at least) two transactions ask to update the \textit{same} datastore record around the \textit{same} time, thus becoming conflicting transactions for \proj{} in our setup.
(We also note that the artificially-added clock shift tend to increase the probability of conflicts with high contention levels.)
We kept 10 clients for those experiments, leading to an average of 20 transactions per second (tx/s).
We also tested \proj{} with a slower rate of 10 transactions per second for comparison.
The worst case scenario would be that every transaction hitting the hotspot is eventually dropped:
in \proj{}, a transaction is dropped if a significant number of nodes ($n - \omega + 1$) are unable to endorse the transaction before its deadline.

As expected, the contention level has a direct impact on the ratio of dropped transactions.
For low levels of contention (up to 3\%), almost no transaction is being dropped thanks to conditional endorsements.
For higher levels of contention (up to excessively high levels), the drop ratio stays well below the worst case scenario.

\subsection{Speculative execution}\label{sec:speculative}
We implemented speculative execution (\autoref{sec:protocol}) to compare its latencies against classic commit latencies as presented in the previous subsections.
First, let us recall that once a transaction is \textsc{Applied} by speculative execution, there is no guarantee that it will eventually become \textsc{Committed}.
However, we can expect that the probability of rolling back to a non-\textsc{Applicable} (or \textsc{Dropped}) state stays very low (\autoref{fig:checkpoints}).

\autoref{fig:speculative_exec} shows the average latency observed for weak and strong guarantees with an increasing level of concurrency (resp. \textsc{Applied} and \textsc{Committed} states).
Speculative execution benefits are clearly visible for any level of contention, improving latency by up to 50\%,
a boost that could  clearly benefit applications that only require weak guarantees.
This improvement came with no additional cost, since \textit{at most} 0.03\% of speculative executions had to rollback.
In the classic configuration, the commit latency can be severely affected by pending checkpoints, leading to less-predictable performance.
As shown in~\autoref{fig:speculative_exec}, the commit latency is clearly more foreseeable in the speculative execution mode, staying stable despite increasing contention.
This observation confirms the positive impact of speculative execution.

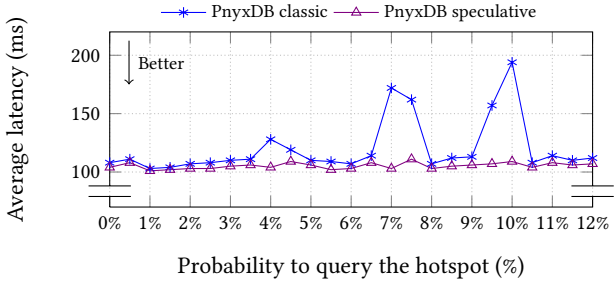
\begin{figure}
  \centering
  \begin{tikzpicture}
  \begin{axis}[
      enlarge x limits=false,
      small,
      height=0.22\textwidth,
      width=0.45\textwidth,
      major grid style={densely dotted},
      xmajorgrids=true,
      ymajorgrids=true,
      legend columns=2,
      legend style={at={(0.5, 1)},anchor=south,font=\scriptsize,draw=none},
      xlabel={Probability to query the hotspot (\%)},
      xticklabel={\pgfmathprintnumber{\tick}\%},
      xmin=0,
      xmax=12,
      xtick distance=1,
      ylabel={Average latency (ms)},
      ymin=70,
      ymax=220,
      axis y discontinuity=parallel,
    ]

    \addplot+ [draw=blue,mark=asterisk]
    table [x=hotspot,y expr=\thisrow{mean},col sep=comma]
    {varying_hotspot_ours_csv.tex};
    \addlegendentry{\proj{} classic}

    \addplot+ [draw=violet,mark=triangle]
    table [x=hotspot,y expr=\thisrow{mean},col sep=comma]
    {varying_hotspot_ours_applied_csv.tex};
    \addlegendentry{\proj{} speculative}

    \draw [->,black] (rel axis cs:0.04,0.95) -- node [anchor=west,font=\scriptsize] {Better} (rel axis cs:0.04, 0.7);
  \end{axis}
\end{tikzpicture}
  \caption{Difference between classic and speculative operation execution latencies. There is a significant gain with speculative execution for high levels of contention.}\label{fig:speculative_exec}
\end{figure}
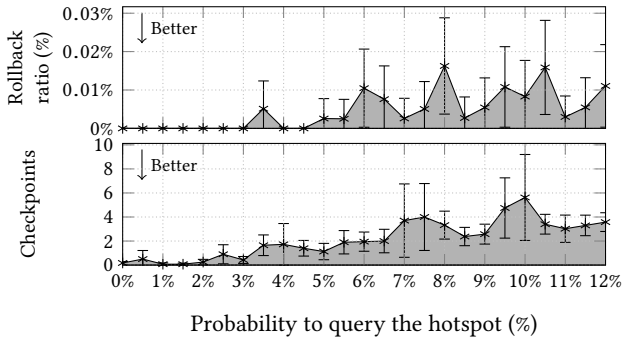
\begin{figure}
  \centering
  \begin{tikzpicture}
  \begin{axis}[
      name=a,
      area style,
      enlarge x limits=false,
      small,
      height=0.18\textwidth,
      width=0.45\textwidth,
      major grid style={densely dotted},
      xmajorgrids=true,
      ymajorgrids=true,
      legend columns=2,
      legend style={at={(0.5, 1.2)},anchor=north,font=\scriptsize,draw=none},
      xticklabel={\ },
      yticklabel={\pgfmathprintnumber[fixed]{\tick}\%},
      scaled ticks=false,
      xmin=0,
      xmax=12,
      xtick distance=1,
      ylabel style={align=center,font=\footnotesize},
      ylabel={Rollback\\ratio (\%)},
      ymin=0,
    ]

    \addplot [fill=black!30,draw=black,mark=asterisk,error bars/.cd,y dir=both,y explicit]
    table [x=hotspot,y=rollbacks,y error=err,col sep=comma]
    {rollbacks_csv.tex} \closedcycle;

    \draw [->,black] (rel axis cs:0.04,0.95) -- node [anchor=west,font=\scriptsize] {Better} (rel axis cs:0.04, 0.7);
  \end{axis}

  \begin{axis}[
      name=b,
      at={($(a.south)-(0,0.2cm)$)},
      anchor=north,
      area style,
      enlarge x limits=false,
      small,
      height=0.18\textwidth,
      width=0.45\textwidth,
      major grid style={densely dotted},
      xmajorgrids=true,
      ymajorgrids=true,
      legend columns=2,
      legend style={at={(0.5, 1.2)},anchor=north,font=\scriptsize,draw=none},
      xlabel={Probability to query the hotspot (\%)},
      xticklabel={\pgfmathprintnumber{\tick}\%},
      xmin=0,
      xmax=12,
      xtick distance=1,
      ylabel style={align=center,font=\footnotesize},
      ylabel={Checkpoints},
      ymin=0,
    ]

    \addplot [fill=black!30,draw=black,mark=asterisk,error bars/.cd,y dir=both,y explicit]
    table [x=hotspot,y=checkpoints,y error=err,col sep=comma]
    {checkpoints_csv.tex} \closedcycle;

    \draw [->,black] (rel axis cs:0.04,0.95) -- node [anchor=west,font=\scriptsize] {Better} (rel axis cs:0.04, 0.7);
  \end{axis}
\end{tikzpicture}
  \caption{Top: experimental ratio between the number of rollbacks and the number of transaction applications. Bottom: number of checkpoints executed across 1000 transaction submissions.}\label{fig:checkpoints}
\end{figure}

\subsection{Impact of checkpoints for conflict resolution}\label{sec:checkpoints_eval}

Algorithms~\autoref{checkstatealgo} and~\autoref{checkpointalgo} suggest that checkpoints must be triggered immediately when a single transaction could be dropped by a node.
This is inefficient, since the proposed checkpoint procedure is costly in terms of bandwidth.
Thus, we added a pooling mechanism to limit the number of checkpoints: by aggregating transactions before proposing checkpoints, each node optimizes its bandwidth while slightly increasing the commit latencies of conflicting transactions.
In our evaluation, at most 10 checkpoints were triggered (\autoref{fig:checkpoints}).
Given that checkpoints happen mainly in times of high levels of contention, we can conclude that their number is still practical, thanks to our pooling optimization.

The longest path measured in the graphs of conditions was of length 34, requiring less than 1 millisecond to be processed.
This observation indicates that our conditional endorsement scheme is scalable and practical in terms of complexity.
We note that non-conflicting transactions were \textit{not} affected and continue to be committed even when the hotspot probability is high.
This is not the case for the other baselines, where transactions are totally ordered by successive leaders.

\subsection{Large scale experiments}

\begin{table}[t]
  \caption{Worldwide AWS deployment: inter-region round-trip time (May 6, 2019).
    Nodes were evenly sharded between regions.}\label{fig:aws_latencies}
  {\footnotesize
  \begin{tabular}{r|c|c|c|c|c}
  \toprule
    \emph{(in ms)}  & Virginia & São Paulo & Paris & Frankfurt & Sydney \\
    California & 61 & 194 & 138 & 147 & 149 \\
    Virginia   &    & 147 & 79  & 88  & 204 \\
    São Paulo  &    &     & 223 & 226 & 315 \\
    Paris      &    &     &     & 10  & 283 \\
    Frankfurt  &    &     &     &     & 292 \\
  \bottomrule
  \end{tabular}
  }
\end{table}

\begin{figure}
  \begin{tikzpicture}
  \begin{axis}[
      small,
      height=0.20\textwidth,
      width=0.45\textwidth,
      major grid style={densely dotted},
      xmajorgrids=true,
      ymajorgrids=true,
      legend columns=2,
      legend style={at={(0.5, 1.)},anchor=north,font=\scriptsize,draw=none},
      xlabel={Number of endorsers},
      xtick=data,
      xmin=6,
      xmax=180,
      ylabel={Commit\\latency (s)},
      ylabel style={align=center,font=\footnotesize},
      ymin=0,
      ymax=6,
      ytick distance=1,
    ]

    \addplot+ [draw=blue,mark=asterisk]
    table [x=n,y expr=\thisrow{l95}/1000,col sep=comma] {aws_csv.tex};
    \addlegendentry{$95^{th}$ percentile}

    \addplot+ [draw=blue!40!white,mark=none]
    table [x=n,y expr=\thisrow{mean}/1000,col sep=comma] {aws_csv.tex};
    \addlegendentry{Average}

    \draw [->,black] (rel axis cs:0.04,0.85) -- node [anchor=west,font=\scriptsize] {Better} (rel axis cs:0.04,0.6);
  \end{axis}
\end{tikzpicture}
  \caption{Worldwide AWS deployment: commit latency with increasing number of endorser nodes. $\omega$ was set to 70\% of endorsers.\label{fig:aws}}
\end{figure}
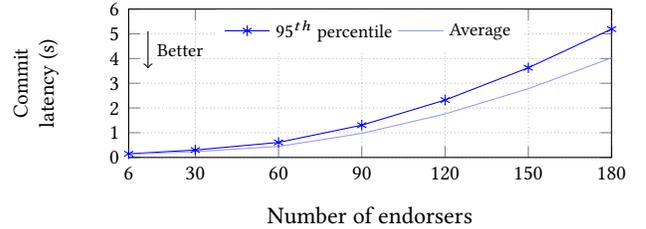

To assert \proj{}'s scalability in a global setting, we used Amazon Web Services (AWS) \texttt{t2.micro} EC2 instances to deploy nodes uniformly in 6 AWS regions (\autoref{fig:aws_latencies}).
During our experiments, we observed a steady maximum inter-region round-trip time of 315 ms between São Paulo and Sydney regions.
Even under these conditions, \proj{} managed to commit transactions with an average latency lower than 2 seconds for networks up to 180 nodes (\autoref{fig:aws}).

\section{Related work}\label{sec:related_work}

There exists a large number of consensus proposals for block\-chain-like applications \cite{Garay2018,Bano}. The consensus problem is tackled very differently in public permissionless systems compared to permissioned consortium systems.
For public systems, there have been many efforts towards Proof-Of-Stake consensus:
a small subset of participants are pseudo-randomly selected to lead the consensus for a specific round, based on their account balance (i.e. \textit{stake} in the network).
Recent proposals rely on a multiparty coin-flipping protocol for leader selection \cite{Kiayias2017}, or propose a probabilistic method using verifiable random functions~\cite{Gilad2017}.
These proposals are said to be easily vulnerable to Sybil attacks~\cite{Douceur2002} since anyone can participate.
Proof-Of-Work schemes are still considered to be the safest, and main public networks still use it extensively.
More efficient algorithms have been proposed to provide more fairness to small devices~\cite{Popov2018}, less communication rounds~\cite{Abraham2018} or more useful computations~\cite{Ball2017}.

In this paper we focus on permissioned systems where participants are known in advance.
This path allows for more flexibility in the choice of the threat model and the trust assumptions.
RSCoin~\cite{Danezis2015} relies on a trusted central bank and on distributed sets of authorities for improved scalability.
Similarly, the Corda platform~\cite{Hearn2016} puts trust in small notary clusters running consensus algorithms like \BFTSMART~\cite{Bessani2014}. 
In its current version, Hyperledger Fabric~\cite{Androulaki2018,Istvan2018} also requires that a centralized ordering service is trusted by every party.
These solutions, while giving good scalability promises, rely on central points of trust, that if manipulated by a malicious actor would break the entire sytem.
The common assumption is that such entities would be legally accountable through audits:
to remove that assumption, Sousa et.\ al.~\cite{Sousa2018} proposed to replace the current Kafka ordering service by \BFTSMART in Hyperledger Fabric.
\proj{} leverages a web of trust to ensure good scalability and node recovery, while avoiding central points and elected leaders.
We believe this makes our proposal more robust to corruption and malicious manipulation.

Other consortium systems have also been proposed~\cite{Buchman2016,Crain2017,Friedman2017,Rocket2019}.
Randomization techniques have been used to solve asynchronous BFT consensus~\cite{Moniz2006,Moniz2010,Miller2016}, among which BEAT~\cite{Duan2018} that suggests to rely on recent cryptographic primitives.
Such systems usually require that correct nodes present deterministic execution for consistency~\cite{Cachin2016,Yamashita2019}.
By comparison, \proj{} relies on a consensus algorithm specifically designed for non-deterministic democratic decisions, and exploits operations commutativity in a similar way than~\cite{Singh2009,Raykov2011,Hearn2016,Park2019}.
In similar AWS deployments, BEAT reports a latency of around 500 ms for 6 nodes and more than 1 minute for 92 nodes, while \proj{} proposes 130 ms and less than 1 second, respectively~\cite{Duan2018}.

Speculative execution has been proposed in BFT consensus to reduce latencies~\cite{Kotla2009,Singh2009,Garcia2011,Luiz2011,Duan2014}.
We considered this optimization to speed-up our state synchronization algorithm while achieving extremely low rates of rollbacks.
Finally, note that some vote schemes have been proposed \cite{Keleher1999,Barreto2007}, but they apply only for non-Byzantine fault models.

\balance
\section{Discussion}\label{sec:discussion}

\proj{} has been designed to work with state-of-the-art networking techniques.
However, some elements can affect its liveness.

  \textbf{Invalid deadline} A client may submit a transaction with a very low or high deadline relative to the absolute time.
    The first case is handled by the endorsement conditions mechanism, but nodes may block in the second case.
    Bounds on deadlines would be a simple countermeasure to filter incoming transactions and avoid endorsing transactions with out-of-bounds deadlines~\cite{Clement2009}.

  \textbf{Conflicting transaction flooding} A rogue client could send many simultaneous conflicting transactions, such that it will be hard to reach a single quorum agreement within the transactions deadline.
    This attack will not break the safety, but the system may drop transactions, with a large number of checkpoints being handled.
    A solution would be to isolate the responsible node and rate-limit it.

  \textbf{Query drops} As underlined in our evaluation, query drops are expected during contention.
    This behavior is very common in classic and BFT replicated databases~\cite{Luiz2011,Pedone2011}, and each client could easily propose several transactions until one is finally committed.
    It is however clear that \proj{} has been designed mainly for commutative workloads, as commonly seen in modern distributed applications.

  \textbf{Checkpoint with asynchrony} Our BVP implementation waits for good network conditions before allowing dropping transactions.
  This is a safety requirement, given that Byzantine nodes could delay their endorsements indefinitely under asynchrony.
  An interesting property is that non-conflicting transactions are always allowed to proceed, independently of pending checkpoints.

  \textbf{Other optimizations} We did not test batching of transactions to increase throughput~\cite{Rocket2019,Sousa2018,Singh2009,Kotla2009,Duan2018}.
    We focused in this work on latency optimization, hence we believe that transaction batching is an orthogonal optimization that may further increase \proj{} throughput.

\section{Conclusion}\label{sec:conclusion}

In this paper, we presented \emph{\proj{}}, a scalable eventually-consistent BFT replicated datastore.
At its core lies a scalable low-latency conflict resolution protocol, based on \emph{conditional endorsements}.
\proj{} supports nodes having different beliefs and policy agendas, allowing to build new kinds of democratic applications with first-class support for non-conflicting transactions.
Compared to popular BFT implementations, we demonstrated that our system is able to reduce commit latencies by an order of magnitude under realistic conditions, while ensuring steady commit throughput.
In particular, our experimental evaluation shows that \proj{} is capable of scaling to up to hundreds of replicas on a geodistributed cloud deployment.


\newgeometry{twoside=true, head=13pt,
twocolumn, paperwidth=8.5in, paperheight=11in,
includehead=false, columnsep=2pc,
top=1.1in, bottom=1.1in, inner=0.75in, outer=0.75in,
}
\bibliographystyle{ACM-Reference-Format}
\bibliography{sources.bib,web.bib}
\end{document}